\title{Weisfeiler-Leman Invariant Promise Valued CSPs\thanks{This work
appeared in \emph{Proceedings of the 28th International Conference on Principles and Practice of Constraint Programming (CP 2022)}. Libor Barto has received funding from the European Research Council (ERC) under the European Unions Horizon 2020 research and innovation programme (Grant Agreement No.~771005, CoCoSym). Silvia Butti was supported by a MICCIN grant PID2019-109137GB-C22 and by a fellowship from ``la Caixa'' Foundation (ID 100010434). The fellowship code is LCF/BQ/DI18/11660056. This project has received funding from the European Union’s Horizon 2020 research and innovation programme under the Marie Skłodowska-Curie grant agreement No. 713673.}}
\author{Libor Barto\\
Department of Algebra, Faculty of Mathematics and Physics\\
Charles University, Czechia\\
\texttt{libor.barto@gmail.com}
\and
Silvia Butti\\
Department of Information and Communication Technologies\\
Universitat Pompeu Fabra, Spain\\
\texttt{silvia.butti@upf.edu}
}
\date{}
\newtheorem{theorem}{Theorem}
\newtheorem{proposition}[theorem]{Proposition}
\theoremstyle{definition}
\newtheorem{example}{Example}
\newcommand{\qq}{\mathbb{Q}_{\geq0}}
\newcommand{\qinfty}{\mathbb{Q}_{\infty}}
\newcommand{\mb}[1]{\mathbf{#1}}
\newcommand{\graph}[1]{\mathrm{G}_{#1}}
\newcommand{\csp}{\textnormal{CSP}}
\newcommand{\pcsp}{\textnormal{PCSP}}
\newcommand{\pvcsp}{\textnormal{PVCSP}}
\newcommand{\lbl}{\ell}
\newcommand{\thresh}{\tau}
\newcommand{\bb}{\textbf{b}}
\newcommand{\bp}{\textbf{p}}
\newcommand{\ba}{\textbf{a}}
\newcommand{\bv}{\textbf{v}}
\newcommand{\bu}{\textbf{u}}
\newcommand{\bfa}{\mathbf{A}}
\newcommand{\bfi}{\mathbf{I}}
\newcommand{\bfb}{\mathbf{B}}
\newcommand{\bfc}{\mathbf{C}}
\newcommand{\bfy}{\mathbf{Y}}
\newcommand{\bfj}{\mathbf{J}}
\newcommand{\ab}{(\bfa,\bfb)}
\newcommand{\ia}{(\bfi,\bfa)}
\newcommand{\ib}{(\bfi,\bfb)}
\newcommand{\yone}{\mathbf{Y}_{1}}
\newcommand{\ytwo}{\mathbf{Y}_{2}}
\newcommand{\sa}{\textnormal{SA}}
\newcommand{\saia}{\sa^{1}\ia}
\newcommand{\blp}{\textnormal{BLP}}
\newcommand{\RA}{R^\bfa}
\newcommand{\RB}{R^\bfb}
\newcommand{\RI}{R^\bfi}
\newcommand{\val}{\operatorname{Val}}
\newcommand{\opt}{\operatorname{Opt}}
\newcommand{\ar}{\operatorname{ar}}
\newcommand{\lpam}{\operatorname{LP}^{m}(\bfa)}
\newcommand{\eqone}{\equiv_{1}}
\newcommand{\quot}{/\negthickspace\equiv_1}
\newcommand{\ith}{^{\textnormal{th}}}
\newcommand{\mult}[1]{\{\!\{#1\}\!\}}
\newcommand{\yes}{\mathrm{Yes}}
\newcommand{\no}{\mathrm{No}}
\begin{document}

\maketitle

\begin{abstract}
In a recent line of work, Butti and Dalmau have shown that a  fixed-template Constraint Satisfaction Problem  is solvable by a certain natural linear programming relaxation (equivalent to the basic linear programming relaxation) if and only if it is   solvable on a certain distributed network, and this happens if and only if its set of $\yes$ instances is closed under Weisfeiler-Leman equivalence. We generalize this result to the much broader framework of fixed-template Promise Valued Constraint Satisfaction Problems. 
Moreover, we show that two commonly used linear programming relaxations are no longer equivalent in this broader framework. 
\end{abstract}

\section{Introduction}

The Constraint Satisfaction Problem (CSP) is the problem of deciding whether there is an assignment of values from some domain $A$ to a given set of variables, subject to constraints on the combinations of values which can be assigned simultaneously to certain specified subsets of variables; the allowed combinations of values are specified by relations on $A$. 

Many important computational problems, including various versions of logical satisfiability, graph coloring, and systems of equations, can be obtained by fixing a finite domain and restricting the set of allowed relations~\cite{feder1998computational,krokhin2017constraint}. The restrictions can be specified by fixing a relational structure $\bfa$, called a template. The CSP over $\bfa$ is then the CSP restricted to instances that use only relations in $\bfa$. For example, if $\bfa$ consists of a single binary relation $R^{\bfa} \subseteq A^2$, an instance of the CSP over $\bfa$ is, e.g.,
\begin{equation} \label{eq:CSPinst}
  R(x_1,x_2), \ R(x_3,x_1), \ R(x_2, x_4), \ R(x_3,x_3).
\end{equation}
The goal is to decide whether there exists an assignment $h:\{x_1, x_2, \dots\} \to A$ that satisfies all the constraints, that is, $(h(x_1),h(x_2)) \in R^{\bfa}$, 
$(h(x_3),h(x_1)) \in R^{\bfa}$, etc. (see Section~\ref{sec:prelim} for formal definitions). For instance, if $R^{\bfa}$ is the disequality relation $\neq$ on $A$, then the CSP over $\bfa$ is essentially the Graph $|A|$-Coloring Problem. 

This paper deals with CSPs over fixed templates with finite domains. In particular, the phrase ``a CSP'' in the following discussion means the CSP over some template.  

The (finite-domain, fixed-template) CSP has been a very active research area in the last 20 years, fueled by the tight connection between the complexity of a CSP and the polymorphisms of its template -- these are multivariate functions on the domain that preserve all relations in the template (see~\cite{Barto2017polymorphisms}). The highlight in the area is the dichotomy theorem \cite{Bulatov2017,Zhu20}: every CSP is either solvable in polynomial time or NP-complete (assuming P is not NP); moreover, the polynomial cases are characterized by means of polymorphisms. Other major results include characterizations of applicability of fundamental algorithms, e.g., certain convex relaxations~(see \cite{KolmogorovTZ15,ThapperZ17}). 

A natural linear programming relaxation, which is central in this paper, can be obtained by formulating a CSP instance as a feasibility problem for a zero-one integer program  and then relaxing the requirement that each variable $p$ is in $\{0,1\}$ to $p \in [0,1]$. In fact, there are two widely used relaxations of this form, the Basic Linear Programming ($\blp$) relaxation (see \cite{KZ_vcsps}) and a slightly stronger relaxation, which we denote by $\sa^1$ to highlight its connection to the Sherali-Adams hierarchy~\cite{Sherali1990} for CSPs (see~\cite{ButtiD21fractional}). The difference between the two relaxations is only in how they address repeated variables in a constraint. It turns out that both relaxations (correctly) decide the same CSPs~\cite{ButtiD21fractional} in the sense that, for any template $\bfa$, all instances of the CSP over $\bfa$ are decided by the $\sa^1$ relaxation if and only if they are decided by $\blp$.\footnote{
We remark that in the literature the difference between the two relaxations is sometimes neglected, which occasionally leads to unjustified or slightly incorrect claims.
}
Moreover, this happens if and only if the template admits symmetric polymorphisms of all arities~\cite{Kun2012} (see also~\cite{Barto2021algebraicPCSP}).

The class of CSPs decided by $\blp$ ($\sa^1$) has reappeared recently in~\cite{ButtiD2021distributed}, where it was shown that it coincides with the class of CSPs which can be solved on a distributed network. The distributed set-up here is based on the DCSP framework of Yooko et al.~\cite{yokoo1992distributed}; informally,  each constraint and each variable is controlled by an agent; the communication is only between a constraint and a variable that participates in it; and the agents are anonymous, they communicate in synchronous rounds, and they all run the same deterministic algorithm.

The papers~\cite{ButtiD2021distributed,ButtiD21fractional} contribute another interesting characterization, by means of an equivalence akin to the 1-dimensional Weisfeiler-Leman graph isomorphism test~\cite{leman1968reduction}. For two CSP instances $\bfi$, $\bfj$ we write $\bfi \eqone \bfj$ if, very roughly, they cannot be distinguished by considering their local structure around variables (number and type of constraints they participate in, number and type of constraints their adjacent variables participate in, and so on). Now the equivalent conditions discussed above are also equivalent to the CSP being invariant under $\eqone$. Altogether, we have the following theorem, which witnesses the significance of this class of CSPs.

\begin{theorem}[\cite{ButtiD2021distributed,ButtiD21fractional,Kun2012}] \label{thm:old}
The following are equivalent for the $\csp$ over a finite structure $\bfa$.
\begin{enumerate}[label=\textnormal{(\roman*)}]
    \item There exists a distributed algorithm that solves $\csp(\bfa)$. Moreover, in such a case, there is a polynomial-time distributed algorithm that solves $\csp(\bfa)$.  \label{item:1thmold}
    \item If two instances of $\csp(\bfa)$ are $\eqone$-equivalent, then they are either both $\yes$ instances or both $\no$ instances. \label{item:2thmold}
    \item $\sa^1$ decides $\csp(\bfa)$. \label{item:3thmold}
    \item $\blp$ decides $\csp(\bfa)$. \label{item:4thmold}
    \item $\bfa$ has symmetric polymorphisms of every arity. \label{item:5thmold}
\end{enumerate}
\end{theorem}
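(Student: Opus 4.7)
My plan is to prove the theorem by assembling four pairs of equivalences, namely \ref{item:4thmold}~$\Leftrightarrow$~\ref{item:5thmold}, \ref{item:3thmold}~$\Leftrightarrow$~\ref{item:4thmold}, \ref{item:2thmold}~$\Leftrightarrow$~\ref{item:3thmold}, and \ref{item:1thmold}~$\Leftrightarrow$~\ref{item:2thmold}, which together connect all five conditions. Each pair combines a short structural observation with a result from one of the three cited papers.

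For \ref{item:4thmold}~$\Leftrightarrow$~\ref{item:5thmold} I would invoke Kun's theorem~\cite{Kun2012} directly. For \ref{item:3thmold}~$\Leftrightarrow$~\ref{item:4thmold}, the direction \ref{item:4thmold}~$\Rightarrow$~\ref{item:3thmold} is immediate: $\sa^1$ is a tightening of $\blp$, so every $\sa^1$-feasible instance is also $\blp$-feasible, hence if $\blp$ correctly rejects every $\no$ instance, so does $\sa^1$. The converse direction is not automatic, since $\sa^1$ could in principle reject $\no$ instances that $\blp$ accepts; here I would quote the Butti--Dalmau theorem of~\cite{ButtiD21fractional} stating that $\sa^1$ and $\blp$ decide exactly the same CSPs.

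The pair \ref{item:2thmold}~$\Leftrightarrow$~\ref{item:3thmold} is the central one. For \ref{item:3thmold}~$\Rightarrow$~\ref{item:2thmold} I would observe that the $\sa^1$ linear program is built from data depending only on the $\eqone$-type of the instance, so $\bfi \eqone \bfj$ induces a bijection between their feasible fractional solutions and therefore agreement of $\sa^1$ on $\bfi$ and $\bfj$; if $\sa^1$ decides, then the $\yes/\no$ status is $\eqone$-invariant. The converse \ref{item:2thmold}~$\Rightarrow$~\ref{item:3thmold} is the technical core of~\cite{ButtiD21fractional}: given an $\sa^1$-feasible instance $\bfi$, one constructs an explicitly satisfiable instance $\bfj$, for example a suitable blow-up whose multiplicities realize the fractional weights, such that $\bfi \eqone \bfj$; then \ref{item:2thmold} transfers satisfiability of $\bfj$ back to $\bfi$.

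Finally, \ref{item:1thmold}~$\Rightarrow$~\ref{item:2thmold} is transparent because the distributed model is anonymous and synchronous: after $k$ rounds the view of each agent coincides with the $k$-th iterate of colour refinement, which cannot separate $\eqone$-equivalent instances, so the algorithm must return the same answer on both. The converse \ref{item:2thmold}~$\Rightarrow$~\ref{item:1thmold} is the algorithm of~\cite{ButtiD2021distributed}: run distributed colour refinement until stabilization and then let each agent apply a uniform decision rule to its stabilized $\eqone$-type; the construction runs in polynomial time, yielding the quantitative addendum. The step I expect to be the main obstacle is the rounding inside \ref{item:2thmold}~$\Rightarrow$~\ref{item:3thmold}: building a satisfiable $\bfj$ with the same $\eqone$-type as a given $\sa^1$-feasible $\bfi$ requires controlling all local statistics and the existence of an exact solution simultaneously, and this is where the substantive work of~\cite{ButtiD21fractional} sits. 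The remaining arrows are either elementary or direct appeals to Kun and Butti--Dalmau.
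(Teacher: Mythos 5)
Your overall route --- assembling \ref{item:1thmold}$\Leftrightarrow$\ref{item:2thmold}$\Leftrightarrow$\ref{item:3thmold}$\Leftrightarrow$\ref{item:4thmold}$\Leftrightarrow$\ref{item:5thmold} by citing Kun and Butti--Dalmau --- is the same as the paper's, which states the theorem as a compilation of \cite{ButtiD2021distributed,ButtiD21fractional,Kun2012}. The genuine gap is exactly the step you call ``transparent'', namely \ref{item:1thmold}$\Rightarrow$\ref{item:2thmold}. A distributed algorithm is only required to solve \emph{connected} instances (running one on a disconnected network makes no sense), whereas \ref{item:2thmold} quantifies over all instances. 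Your anonymity/synchronicity argument only yields that for \emph{connected} $\bfi \eqone \bfj$ the algorithm answers the same on both. For disconnected $\bfi \eqone \bfj$ the connected components of $\bfi$ need not be $\eqone$-equivalent to components of $\bfj$; they are only matched up to what the paper calls weak congruence, i.e.\ equality of iterated degree data after rescaling multiplicities (think of two disjoint triangles versus a single hexagon, which are $\eqone$-equivalent as instances of a binary-relation CSP but have non-$\eqone$-equivalent components). Closing this requires an additional argument --- in this paper, a blow-up construction producing connected $\bfi^{(k)}, \bfj^{(k)}$ that are genuinely $\eqone$-equivalent, followed by an additivity argument over components --- and the paper explicitly notes (footnote in the introduction) that this issue was \emph{not} properly handled in \cite{ButtiD2021distributed}, so you cannot silently inherit it from the cited source either.

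Two smaller inaccuracies, neither fatal since you defer to the cited papers for the details. In \ref{item:2thmold}$\Rightarrow$\ref{item:3thmold}, the construction does not produce a satisfiable $\bfj$ with $\bfj \eqone \bfi$: the blow-up realizing the fractional weights lives on $[m]\times I$, so what one actually gets is a disjoint union $\yone$ of $m$ copies of $\bfi$ together with a satisfiable ``twisted'' structure $\ytwo$ satisfying $\yone \eqone \ytwo$ (the shape of Theorem~\ref{thm:Decomp} here); one then still needs the (easy for CSP) step passing from $\yone$ back to $\bfi$. And in \ref{item:2thmold}$\Rightarrow$\ref{item:1thmold}, ``colour refinement plus a uniform decision rule on the stabilized type'' is not yet an algorithm: a single agent's stabilized type does not determine satisfiability, and even the full iterated degree sequence determines the answer only nonconstructively under \ref{item:2thmold}. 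The actual algorithms either run a distributed consistency procedure (the route of \cite{ButtiD2021distributed}, which goes through \ref{item:5thmold} and the theorem of \cite{KozikConsistency}) or aggregate the degree data to build and solve the reduced $\sa^1$ program, as in this paper's proof of \ref{item:3thmain}$\Rightarrow$\ref{item:1thmain} in Theorem~\ref{th:main}.
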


\noindent
Our main result generalizes Theorem~\ref{thm:old} to a much broader setting, which we introduce next. 

\subsection{Promise Valued CSP}
The framework of Valued CSP (VCSP) generalizes CSP as follows. Instead of relations we consider valued relations (also known as cost functions) -- mappings that assign to tuples rational or positive infinite costs. Returning to the example above, $R^{\bfa}$ is now a mapping from $A^2$ to $\mathbb{Q} \cup \{\infty\}$ instead of a subset of $A^2$. The objective of the search version of the VCSP over $\bfa$ is to minimize a sum, e.g.,
\begin{equation} \label{eq:VCSPinst}
  R(x_1,x_2) + \ R(x_3,x_1) + \ R(x_2, x_4) + \ R(x_3,x_3),
\end{equation}
that is, to find an assignment $h$ such that $R^{\bfa}(h(x_1),h(x_2)) + R^{\bfa}(h(x_3),h(x_1)) + \dots$ is minimal. In the decision version, which we consider in this paper, the instance is such a sum together with a rational number $\thresh$ and we aim to decide whether the minimum is at most $\thresh$.

Notice that (the decision version of) VCSP indeed generalizes CSP since relations can be modelled by $\{0,\infty\}$-valued relations. On the other hand, MaxCSP -- where the aim is to maximize the number of satisfied constraints given a CSP instance -- is exactly the VCSP over $\{0,1\}$-valued relational structures. The VCSP framework also includes many problems of a mixed optimization and combinatorial nature, such as the Vertex Cover Problem~(see \cite{KZ_vcsps}). The VCSP area is also well developed; for instance, the approach via an appropriate generalization of polymorphisms still works~(see~\cite{KZ_vcsps}), a dichotomy theorem is available~\cite{Kolmogorov2017}, and the equivalence of \ref{item:4thmold} and \ref{item:5thmold} in Theorem~\ref{thm:old} can be lifted as well~\cite{KolmogorovTZ15}.

The more recent framework of Promise CSP (PCSP) generalizes CSP in a different direction. Here the relations are ``crisp'' but the template is a pair of structures $(\bfa,\bfb)$ of the same signature. Intuitively, $R^{\bfa}$ is a ``strict'' form of $R$ and $R^{\bfb}$ is its ``relaxed'' form. The PCSP over $(\bfa,\bfb)$ is the problem of distinguishing instances solvable in $\bfa$ from those which are not solvable in $\bfb$. Note that the problem only makes sense if every instance solvable in $\bfa$ is also solvable in $\bfb$ (this is equivalent to $\bfa$ being homomorphic to $\bfb$). A well-known family of PCSP examples is the problem of distinguishing $k$-colorable graphs from those that are not even $l$-colorable for fixed $l \geq k$; see~\cite{Barto2021algebraicPCSP} for further examples. A complete complexity classification for PCSPs seems currently far away. Nevertheless, the algebraic approach via polymorphism  works, and the equivalence of \ref{item:4thmold} and \ref{item:5thmold} in Theorem~\ref{thm:old} also remains valid~\cite{Barto2021algebraicPCSP}.

Finally, the Promise Valued CSP (PVCSP) combines both generalizations. A template is a pair of valued structures of the same signature and the problem is, given   
a sum such as~(\ref{eq:VCSPinst}) and a rational number $\thresh$, to distinguish sums whose minimum computed in $\bfa$ is at most $\thresh$ from those whose minimum in $\bfb$ is greater than $\thresh$. Again, the problem only makes sense if the template satisfies certain properties. An exact characterization of when this happens, Proposition~\ref{prop:FH}, is one of the minor contributions of this paper.

We believe that the PVCSP is an extremely promising research direction for two reasons. First, it is very broad: it includes, for example, all constant factor approximation problems for MaxCSP (both the version where the aim is to approximately maximize the number of satisfied constraints, see Example~\ref{ex:MaxCSP}; and the version where the aim is to approximately minimize the number of unsatisfied constraints). Second, the approach via generalized polymorphisms, so successful in the above special cases, is still available~\cite{Kazda21} (the work is not yet published). 
The only published work on PVCSP that we are aware of is~\cite{ViolaZ21} where the authors, among other results, generalize \ref{item:4thmold} $\iff$ \ref{item:5thmold}  in Theorem~\ref{thm:old} to the PVCSP setting and even consider the more general infinite-domain case.

\subsection{Contributions}

Our main result, Theorem~\ref{th:main}, lifts the equivalence of \ref{item:1thmold}, \ref{item:2thmold}, and  \ref{item:3thmold} in Theorem~\ref{thm:old} to the PVCSP framework.

The generalization of implication \ref{item:1thmold} $\Rightarrow$ \ref{item:2thmold} for connected input valued structures follows easily from the nature of the message passing systems we deal with. General, possibly disconnected input valued structures require an additional argument.\footnote{This subtle issue was not properly handled in~\cite{ButtiD2021distributed}. The present paper thus also fills in a gap in the proof of Theorem~\ref{thm:old}.
}
For the implication \ref{item:2thmold} $\Rightarrow$ \ref{item:3thmold} we employ the approach of~\cite{ButtiD21fractional} and, in a sense, ``decompose'' a solution to the $\sa^1$ relaxation of a PVCSP into three components. One component is a kind of morphism, called here a dual fractional homomorphism, which appeared before in the context of VCSPs with left-hand side (i.e., structural) restrictions~\cite{Carbonnel2022otherside}.\footnote{\cite{Carbonnel2022otherside} uses the terminology ``inverse fractional homomorphism'', however we feel that ``dual'' might better fit the meaning of this concept.}
The decomposition theorem, stated as Theorem~\ref{thm:Decomp}, might be of independent interest.
We also point out that our construction for this decomposition is much simpler than the construction used in~\cite{ButtiD21fractional} for the less general setting. 
The distributed algorithm that we design to prove \ref{item:3thmold} $\Rightarrow$ \ref{item:1thmold} is completely different from the one used for the CSP  in~\cite{ButtiD2021distributed}. The original algorithm relied on a deep theorem from the algebraic CSP theory~\cite{KozikConsistency} about the strength of a certain local propagation algorithm and designed a distributed version of that algorithm. This approach is no longer applicable, even in the (non-valued) PCSP setting. However, we show that a substantially more straightforward and simple idea of directly computing an adjusted form of $\sa^1$ works even in the most general PVCSP framework.

Surprisingly, the implication \ref{item:3thmold} $\Rightarrow$ \ref{item:4thmold} is no longer true for PVCSPs: in Example \ref{ex:SAnotBLP} we present a PVCSP template that is decided by $\sa^1$ but not decided by $\blp$. The converse implication remains valid since $\sa^1$ is a stronger relaxation than $\blp$.

Recall that the equivalence of \ref{item:4thmold} and \ref{item:5thmold} still holds for PVCSPs~\cite{ViolaZ21}; we give a streamlined presentation of the proof using Proposition~\ref{prop:FH}. We also mention, in Example~\ref{ex:pvcsp}, some (P)(V)CSPs that satisfy these conditions, and thus also satisfy the equivalent statements in the main result.

\section{Preliminaries} \label{sec:prelim}

For a tuple $\ba \in A^k$, let $\ba[i]$ denote the $i\ith$ entry of $\ba$. We say that $\textbf{a}$ has a \textit{repetition} if there exist $i \neq j \in [k]$ such that $\ba[i] = \ba[j]$.
We use double curly brackets $\mult{\dots}$ to denote multisets. For a non-negative integer $n$, $n \cdot \mult{\dots}$ stands for the multiset obtained by multiplying the multiplicity of each element in the original multiset by $n$.
Slightly abusing the notation, the set and the multiset of entries of a tuple $\ba$ is denoted by $\{\ba\}$ and $\mult{\ba}$, respectively.

We denote by $\qq$ the set of non-negative rational numbers and by $\qinfty$ the set $\mathbb{Q} \cup \{\infty\}$, where $\infty$ is an additional symbol interpreted as a positive infinity. We set $0 \cdot \infty = 0$ and $c \cdot  \infty = \infty$ for $c>0$.

\subsection{CSP and PCSP}

We present the CSP and PCSP as homomorphism problems. The difference from the presentation in the introduction is merely formal. 

A \textit{signature} $\sigma$ is a finite collection of relation symbols, each with an associated arity. We shall use $\ar(R)$ to denote the arity of a relation symbol $R$. Given a set $A$ and a positive integer $k$, a $k$-ary \textit{relation} on $A$ is a subset of $A^k$. A \textit{(relational) structure} $\bfa$ in the signature $\sigma$, or simply a $\sigma$-structure, consists of a finite set $A$ called the \textit{universe} of $\bfa$, and a relation $\RA$ on $A$ of arity $\ar(R)$ for each $R \in \sigma$. Notice that the universe of every structure in this paper is assumed to be finite. Two structures are \emph{similar} if they have the same signature.

Let $\bfi$, $\bfa$ be $\sigma$-structures. A \textit{homomorphism} from $\bfi$ to $\bfa$ is a map $h: I \to A$ such that for every $R \in \sigma$ and every tuple $\mb{v} \in \RI$ it holds that $h(\mb{v}) \in \RA$, where $h$ is applied to $\mb{v}$ component-wise.
If there exists a homomorphism from $\bfi$ to $\bfa$ we say that $\bfi$ is homomorphic to $\bfa$.

For a relational $\sigma$-structure $\bfa$, the \emph{CSP over $\bfa$}, denoted $\csp(\bfa)$, is the problem of deciding whether an input $\sigma$-structure $\bfi$ is homomorphic to $\bfa$. The structure $\bfa$ is also referred to as a \emph{template} in this context. The translation of the presented definition of $\csp(\bfa)$ to the formalism used in the introduction is given by defining the set of \emph{constraints} $\mathcal{C}_\bfi$ as the set of formal expressions of the form $R(\mb{v})$ where $R \in \sigma$ and $\mb{v} \in \RI$.

Given two $\sigma$-structures $\bfa$ and $\bfb$,
the \emph{Promise CSP over $(\bfa,\bfb)$}, denoted $\pcsp(\bfa,\bfb)$,  is defined as follows: given a  $\sigma$-structure $\bfi$, output $\yes$ if $\bfi$ is homomorphic to $\bfa$, and output $\no$ if $\bfi$ is not homomorphic to $\bfb$.%
\footnote{
We do not impose any requirements on the algorithm in the case that $\bfi$ is neither a $\yes$ instance nor a $\no$ instance. Alternatively, we are \emph{promised} that the input is a $\yes$ instance or a $\no$ instance. 
} This problem makes sense iff the sets of $\yes$ and $\no$ instances are disjoint. It is easy to see that this happens exactly when $\bfa$ is homomorphic to $\bfb$. Such pairs of structures $(\bfa,\bfb)$ are called \emph{PCSP templates}.

\subsection{PVCSP} 
We formalize PVCSPs in a similar way to PCSPs. The difference from the presentation in the introduction is slightly more substantial, as we shall briefly discuss later.

A $k$-ary \textit{valued relation} on $A$ is a function $R:A^k \to \qinfty$. A \textit{valued $\sigma$-structure} $\bfa$  consists of a finite universe $A$, together with a valued relation $\RA$ of arity $\ar(R)$ on $A$ for each $R \in \sigma$.
Valued structures are sometimes referred to as \textit{general-valued} in the literature \cite{KolmogorovTZ15,ThapperZ17} to emphasize that relations in $\bfa$ may take non-finite values. A $\sigma$-structure $\bfa$ is said to be \textit{non-negative finite-valued} if for every $R \in \sigma$, the range of $\RA$ is contained in $\qq$.

Let $\bfi$, $\bfa$ be valued $\sigma$-structures, where $\bfi$ is non-negative finite-valued. The \emph{value} of a map $h: I \to A$ for $(\bfi,\bfa)$, and the \emph{optimum value} for $(\bfi, \bfa)$ are given by 
\begin{equation*}
\val(\bfi,\bfa,h) = \sum_{R \in \sigma} \sum_{\textbf{v} \in I^{\ar(R)}} R^\bfi(\textbf{v}) R^\bfa(h(\textbf{v})), \quad\quad
\opt(\bfi,\bfa)= \min_{h:I \to A}\val(\bfi,\bfa,h).
\end{equation*}
 For two valued $\sigma$-structures $\bfa$ and $\bfb$, the \textit{Promise Valued CSP over $\ab$}~\cite{Kazda21,ViolaZ21}, denoted $\pvcsp \ab$, is defined as follows: given a pair $(\bfi,\thresh)$, where $\bfi$ is a non-negative finite-valued $\sigma$-structure  and $\thresh \in \mathbb{Q}$ is a \emph{threshold}, output $\yes$ if $\opt\ia \leq \thresh$, and output $\no$ if $\opt\ib > \thresh$. 
 We call $(\bfa,\bfb)$ a \emph{PVCSP template} if the sets of $\yes$ and $\no$ instances are disjoint. We show in Proposition~\ref{prop:FH} that this least restrictive meaningful requirement on a PVCSP template coincides with the choice taken in~\cite{ViolaZ21}.
 
 Notice that the values of $R$ have a different intended meaning in the template valued structures $\bfa$, $\bfb$ and in the input valued structure $\bfi$. For the template, $\RA(\ba)$ and $\RB(\bb)$ should be understood as the \emph{cost} of $\ba$ and $\bb$: we wish an assignment $h$ to map tuples of variables to tuples of domain elements that are as cheap as possible (and, in fact, $\RA$ or $\RB$ is often referred to as a \emph{cost function}). On the other hand, $\RI(\bv)$ is the \emph{weight} of the tuple of variables $\bv$: we need to be more concerned about heavy tuples of variables, while we may ignore the tuples of zero weight (recall that $0 \cdot \infty = 0$).
 As an example, observe that the PCSP over a pair of structures $(\bfa',\bfb')$ is essentially the same problem as the PVCSP over the pair of $\{0,\infty\}$-valued structures $\ab$, where tuples in the latter template are given zero cost iff they belong to the corresponding relations in the former template; while to an instance $\bfi'$ of the PCSP corresponds a non-negative finite-valued structure $\bfi$ where the cost of a tuple is zero iff the tuple does \emph{not} belong to the corresponding relation in $\bfi'$ (and costs of the remaining tuples are arbitrary positive rationals), together with any threshold $\thresh \in \qq$.
 
 For a PVCSP input valued $\sigma$-structure $\bfi$ we define the set of \emph{constraints} $\mathcal{C}_\bfi$ as the set of formal expressions of the form $R(\bv)$ where $R \in \sigma$, $\bv \in I^{\ar(R)}$, and $\RI(\bv) > 0$; the value $\RI(\bv)$ is the \emph{weight} of the constraint.
 This almost translates the presented definition of PVCSP to the version from the introduction: weights of constraints can be emulated by repeating constraints in (\ref{eq:VCSPinst}) (and modifying the threshold $\thresh$ if necessary). 
 However, the repetition can cause an exponential blow up of the instance size. Nevertheless, this difference between the two formalisms is inessential for our purposes.

 We say that a valued relation $R^\bfi$ has no repetitions if $R^\bfi(\bv)=0$ whenever $\bv$ has a repetition. Similarly, we say that an input valued structure $\bfi$ has no repetitions if none of its valued relations has a repetition.

\begin{example} \label{ex:MaxCSP}
As mentioned in the introduction, the PVCSP framework can be used to model constant factor approximation problems for MaxCSP. More concretely, suppose that we want to find a $c$-approximation for $\csp(\bfa)$ for some (non-valued) $\sigma$-structure $\bfa$ and some $c < 1$. One can model this problem as $\pvcsp(\bfa',\bfb')$ where $A'=B'=A$ and for all $R \in \sigma$ and $\ba \in A^{\ar(R)}$,  $R^{\bfa'}(\ba)=-1$ if $\ba \in \RA$ and $R^{\bfa'}(\ba)=0$ otherwise; and $R^{\bfb'}(\ba)=\frac{1}{c} R^{\bfa'}(\ba)$. 
Given an instance $\bfi$ of $\csp(\bfa)$ and a parameter $0 < \beta \leq 1$, we turn it into an instance $(\bfi',-\beta m)$ of $\pvcsp(\bfa',\bfb')$ in a natural way, where $\bfi'$ is a 0-1 valued structure and $m$ is the number of constraints in $\bfi'$. Then, $\opt(\bfi',\bfa')\leq -\beta m$ if a $\beta$-fraction of all constraints of $\bfi$ can be satisfied in $\bfa$, and $\opt(\bfi',\bfb')>- \beta m$ if not even a $c\beta$-fraction of the constraints of $\bfi$ can be satisfied in $\bfa$.
\end{example}

\subsection{Linear programming relaxations} 

Given two valued $\sigma$-structures $\bfi$ and $\bfa$ where $\bfi$ is non-negative finite-valued, the systems of inequalities $\blp\ia$ and $\sa^1 \ia$ contain a variable $p_v(a)$ for every $v \in I$ and every $a \in A$, and a variable $p_{R(\bv)}(\ba)$ for every $R(\bv) \in \mathcal{C}_\bfi$ and every $\ba \in A^{\ar(R)}$.  $\blp\ia$ is the following linear program.

\begin{empheq}[box=\fbox]{align}
& \qquad \mathrlap{ \opt^{\blp} \ia := \min \sum_{R(\bv) \in \mathcal{C}_\bfi}  \sum_{\textbf{a} \in A^{\ar(R)}} p_{R(\bv)}(\textbf{a}) R^\bfi(\textbf{v}) R^\bfa(\textbf{a})} \label{eq:objBLP} \tag{$\star$}\\
\quad  &\textnormal{subject to:}\nonumber && \\
& \qquad p_{R(\bv)}(\ba) \geq 0 && R(\bv) \in \mathcal{C}_\bfi, \ \ba \in A^{\ar(R)} \label{eq:SA0} \\
& \qquad \sum_{a \in A} p_v(a)=1  &&  v \in I \label{eq:SA1}\\
& \qquad   p_v(a) = \sum_{\mathclap{\ba \in A^{\ar(R)}, \ba[i]=a}} \, p_{R(\textbf{v})}(\ba) &&  a\in A, \ R(\bv) \in \mathcal{C}_\bfi, \ 
    i \in [\ar(R)] \mbox{ s.t. } \bv[i]=v   \label{eq:SA2}\\
& \qquad  p_{R(\textbf{v})}(\ba) = 0 &&  R(\bv) \in \mathcal{C}_\bfi, \  \ba \in A^{\ar(R)} \textnormal{ s.t. } R^\bfa(\ba) = \infty \quad \label{eq:SA3}
\end{empheq}

As for the program $\saia$, the objective function, denoted $\opt^{\sa^1}\ia$, is given by the same objective function as in $\blp\ia$. The variables are subject to all the constraints in $\blp\ia$, but in addition, they are also subject to the following constraint.

\begin{empheq}[box=\fbox]{align}
  & \quad \quad p_{R(\textbf{v})}(\ba) = 0   \qquad \qquad  &&  R(\bv) \in \mathcal{C}_\bfi, \ \ba \in A^{\ar(R)} \label{eq:SA4} \\
&&& \exists i,j \in [\ar(R)] \textnormal{ such that } \bv[i] = \bv[j] \textnormal{ and }\ba[i] \neq \ba[j] \nonumber
\end{empheq}  
Notice that in general $\opt^\blp\ia \leq \opt^{\sa^1}\ia$. Moreover, in the particular case where $\bfi$ has no repetitions, $\blp$ and $\sa^1$ are the same linear program and so $\opt^\blp\ia = \opt^{\sa^1}\ia$.

For a linear program $\textnormal{L} \in \{\blp,\sa^1\}$ we say that $\textnormal{L}\ia$ is \textit{feasible} if there exists a rational solution to the system $\textnormal{L}\ia$. Notice that then (\ref{eq:objBLP}) makes sense since $\RA(\ba)=\infty$ implies $p_{R(\bv)}=0$ and $0\cdot\infty = 0$ (formally, one should skip these summands in (\ref{eq:objBLP})).
If the linear program is infeasible, then we set $\opt^{\textnormal{L}}\ia = \infty$.

The LP constraints (\ref{eq:SA0})--(\ref{eq:SA2}) ensure that, for each $R(\bv) \in \mathcal{C}_\bfi$, the values of $p_{R(\bv)}(\bf a)$ form a probability distribution on $A^{\ar(R)}$ (which is additionally consistent with $p_v(a)$'s). The inner sum in (\ref{eq:objBLP}) is equal to the expected ``cost'' of the constraint $R(\bv)$ with weight $\RI(\bv)$ when $\bv$ is evaluated according to this distribution. From this observation it is apparent that $\opt^{\textnormal{L}}\ia \leq \opt \ia$. 
We say that $\textnormal{L}$ \textit{decides} $\pvcsp \ab$ if, for every input structure $\bfi$, we have $\opt \ib \leq \opt^{\textnormal{L}}\ia$. Note that in such a case the algorithm for $\pvcsp \ab$ that answers $\yes$ iff $\opt^{\textnormal{L}}\ia \leq \thresh$ (where $\thresh$ is the input threshold) is correct, so the definition makes sense.%
\footnote{
We remark that in \cite{ButtiD21fractional}, the feasibility of the program $\sa^1\ia$ was alternatively phrased as the existence of a ``fractional homomorphism'' from $\bfi$ to $\bfa$, to stress that the linear system $\sa^1\ia$ is a (fractional) relaxation of homomorphism in the same way as the equivalence relation $\eqone$ defined below is a relaxation of isomorphism. Nonetheless, in this paper we avoid this terminology as it clashes with the notion of fractional homomorphism defined in Section \ref{sec:FH} as a unary fractional polymorphism. 
}

\subsection{Polymorphisms}

An $n$-ary \textit{polymorphism} of a pair of similar structures $\ab$ is an $n$-ary operation $f:A^n \to B$ such that for every relation symbol $R$ in the signature of $\bfa$ and $\bfb$, the coordinate-wise application of $f$ to any list of $n$ tuples from $R^\bfa$ results in a tuple in $R^\bfb$. Note that a unary polymorphism of $\ab$ is just a homomorphism from $\bfa$ to $\bfb$.
An $n$-ary operation $f:A^n \to B$ is said to be \textit{symmetric} if for every $a_1,\ldots,a_n \in A$ and every permutation $\rho$ on $[n]$ we have that $f(a_1,\ldots,a_n) = f(a_{\rho(1)},\ldots,a_{\rho(n)})$.

An $n$-ary \textit{fractional polymorphism} \cite{ViolaZ21} of two valued $\sigma$-structures $\ab$ is a probability distribution $\omega$ on the set $B^{A^n}:=\{f:A^n \to B\}$ such that for every $R \in \sigma$ and every list of $n$ tuples  $\textbf{a}_1,\ldots,\textbf{a}_n \in A^{\ar(R)}$ we have that
\[\sum_{f \in B^{A^n}} \omega(f) R^\bfb(f(\textbf{a}_1,\ldots,\textbf{a}_n)) \leq \frac{1}{n}\sum_{i=1}^n  R^\bfa(\textbf{a}_i)\]
where $f$ is applied to $\textbf{a}_1,\ldots,\textbf{a}_n \in A^{\ar(R)}$ component-wise.\footnote{We use here a simpler concept than fractional polymorphism as defined in~\cite{ViolaZ21}, which will be sufficient for our purposes.
}

The \textit{support} of $\omega$ is the set of functions $f:A^n \to B$ such that $\omega(f) >0$. We say that $\omega$ is \textit{symmetric} if every operation in its support if symmetric. 

The following theorem was proved in~\cite{ViolaZ21}; we provide a somewhat streamlined argument in the spirit of~\cite{Barto2021algebraicPCSP} in Section \ref{sec:FH}.

\begin{theorem}  \label{thm:BLPandSym}
Let $\ab$ be a promise valued template of signature $\sigma$. Then the following are equivalent.
\begin{enumerate}[label=\textnormal{(\roman*)}]
\setcounter{enumi}{3}
    \item $\blp$ decides $\pvcsp\ab$; \label{item:4thmain}
    \item $\ab$ has symmetric fractional polymorphisms of every arity. \label{item:5thmain}    
\end{enumerate}
\end{theorem}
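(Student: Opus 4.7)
The plan is to prove the two implications separately, using the indicator-instance/rounding template from~\cite{Barto2021algebraicPCSP} adapted to $\qinfty$-valued structures.

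For \ref{item:5thmain} $\Rightarrow$ \ref{item:4thmain}, the strategy is a standard rounding argument. Given a feasible solution of $\blp\ia$ of value $\tau$, clear denominators to obtain an integer $N$ such that every $Np_v(a)$ and every $Np_{R(\bv)}(\ba)$ is a nonnegative integer. For each $v\in I$, enumerate a tuple $L_v\in A^N$ in which $a$ occurs exactly $Np_v(a)$ times; for each constraint $R(\bv)\in\mathcal{C}_\bfi$, list tuples $\ba_1,\ldots,\ba_N\in A^{\ar(R)}$ in which each $\ba$ appears exactly $Np_{R(\bv)}(\ba)$ times. Condition~(\ref{eq:SA2}) forces the multiset of entries of the $i$-th column of the matrix $[\ba_1\mid\cdots\mid\ba_N]$ to equal $\mult{L_{\bv[i]}}$, while (\ref{eq:SA3}) ensures $\RA(\ba_l)<\infty$ for every $l$. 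For a symmetric fractional polymorphism $\omega$ of arity $N$, set $h_f(v):=f(L_v)$: symmetry makes $h_f$ well defined and yields $f(\ba_1,\ldots,\ba_N)=h_f(\bv)$ componentwise. Multiplying the defining inequality of $\omega$ by $\RI(\bv)$ and summing over constraints gives $\mathbb{E}_{f\sim\omega}[\val(\bfi,\bfb,h_f)]\leq\tau$, so some $f$ in the support of $\omega$ certifies $\opt\ib\leq\tau$.

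For \ref{item:4thmain} $\Rightarrow$ \ref{item:5thmain}, I would fix $n\geq 1$ and construct an \emph{indicator instance} $\bfi_n$. Take $I_n$ to be the set of size-$n$ multisets over $A$; for each $R\in\sigma$ of arity $k$ and each list $(\ba_1,\ldots,\ba_n)\in(A^k)^n$ with $\RA(\ba_i)<\infty$ for all $i$, register a formal unit-weighted constraint $R(\bv)$ with $\bv[j]:=\mult{\ba_1[j],\ldots,\ba_n[j]}$, so that the effective weight $\RI(\bv)$ counts the realising lists. Setting $p_v(a):=\operatorname{mult}(a,v)/n$ and defining $p_{R(\bv)}$ to be the average over realising lists of the empirical joint on $A^k$ yields a feasible BLP solution with finite objective $\tau_n$. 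Each $h:I_n\to B$ corresponds bijectively to a symmetric operation $f_h:A^n\to B$ via $f_h(a_1,\ldots,a_n):=h(\mult{a_1,\ldots,a_n})$, and $\val(\bfi_n,\bfb,h)$ equals the total $\bfb$-cost of $f_h$ across all lists; by \ref{item:4thmain}, some $h^*$ achieves $\val(\bfi_n,\bfb,h^*)\leq\tau_n$.

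The last step promotes this \emph{global} inequality to the \emph{tuple-wise} fractional polymorphism inequality via LP duality on the polytope $\Omega_n$ of probability distributions over symmetric operations $A^n\to B$ satisfying the defining inequality for every finite-cost list. If $\Omega_n$ were empty, Farkas's lemma would produce nonnegative rational multipliers $\lambda_{R,(\ba_1,\ldots,\ba_n)}$ that, substituted as weights in the indicator construction, yield a reweighted instance on which $\blp$ strictly underestimates $\opt(\cdot,\bfb)$, contradicting \ref{item:4thmain}; hence $\Omega_n\neq\emptyset$, which produces the desired symmetric fractional polymorphism of arity $n$. I expect the main obstacle to be the treatment of tuples $\ba$ with $\RA(\ba)=\infty$, which must be excluded from $\bfi_n$ to keep $\tau_n$ finite but still have to be reconciled with the extracted polymorphism; this is precisely the role of Proposition~\ref{prop:FH}, whose dual-fractional-homomorphism characterisation of PVCSP templates is the technical ingredient that enables the streamlining over~\cite{ViolaZ21}.
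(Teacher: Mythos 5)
Your proposal is correct in substance and, at bottom, runs on the same machinery as the paper's proof, only unpacked. The paper introduces the auxiliary structure $\lpam$ on size-$m$ multisets over $A$, cites two known facts --- that $\opt^{\blp}\ia=\min_{m\geq 1}\opt(\bfi,\lpam)$ and that $\lpam\to_f\bfb$ if and only if $\ab$ has an $m$-ary symmetric fractional polymorphism --- and concludes via Proposition~\ref{prop:FH}. Your rounding argument for \ref{item:5thmain} $\Rightarrow$ \ref{item:4thmain} is exactly the content of the first fact (the denominator-clearing construction) combined with the easy direction of Proposition~\ref{prop:FH}, and your indicator-instance-plus-Farkas argument for \ref{item:4thmain} $\Rightarrow$ \ref{item:5thmain} inlines the second fact together with the Farkas direction of Proposition~\ref{prop:FH}: your reweighted indicator instance is precisely the instance over the multiset universe that the paper extracts there. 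What your route buys is self-containedness (no appeal to $\lpam$ or to the two cited properties as black boxes); what the paper's route buys is brevity and reuse of Proposition~\ref{prop:FH}, which it needs anyway for the characterization of PVCSP templates.

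Two points to tighten. First, in the Farkas step the delicate infinities are not those of $\bfa$ (you rightly exclude lists containing tuples of infinite $\bfa$-cost) but those of $\bfb$: you must first restrict to symmetric operations mapping every finite-cost list to a tuple of finite $\bfb$-cost, note that any feasible distribution in $\Omega_n$ is supported on such operations, apply Farkas to the resulting finite-coefficient system, and then verify that the remaining operations do not spoil the inequality $\opt(\bfj_n,\bfb)>\opt^{\blp}(\bfj_n,\bfa)$ for your reweighted instance $\bfj_n$; this mirrors the $B^A_{<\infty}$ device in the proof of Proposition~\ref{prop:FH}, which is a sketch-level but non-trivial omission in your write-up. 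Second, Proposition~\ref{prop:FH} characterizes PVCSP templates via fractional homomorphisms, not dual fractional homomorphisms (those appear only in Proposition~\ref{prop:dualFH} and the decomposition theorem); relatedly, the unit-weight indicator instance and the single assignment $h^*$ in your second paragraph are subsumed by the Farkas step and can simply be dropped.
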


\begin{example} \label{ex:pvcsp}
A CSP that can be decided by $\blp$ is e.g. the Horn-3-Sat, where the template has domain $\{\mathrm{true}, \mathrm{false}\}$ and two relations defined by $\neg x \vee \neg y \vee \neg z$ and $\neg x \vee \neg y \vee z$.
A well-known class of templates with $\blp$-decidable VCSPs are those that contain only submodular valued relations (see~\cite{KZ_vcsps}). Finally, the 2-approximation of the Vertex Cover problem~\cite{KZ_vcsps} is a PVCSP decidable by $\blp$. In all the mentioned examples, it is not hard to find symmetric (fractional) polymorphisms of every arity.
\end{example}

\subsection{Graph of an input, iterated degree, distributed model}

We represent an input $\sigma$-structure $\bfi$ to a PVCSP as a labeled bipartite graph $\graph{\bfi}$, also known as the \textit{factor graph} of $\bfi$ in the non-valued setting \cite{Fioretto2018}. This representation allows us to define iterated degrees of variables and constraints as well as our distributed model.

$\graph{\bfi}$ has one vertex for each constraint $R(\bv) \in \mathcal{C}_\bfi$, labeled $(R,q)$ where $q=R^{\bfi}(\bv)$ ($>0$), and one vertex for each variable, with empty label. Vertex $v \in I$ is adjacent to a vertex $R(\bv) \in \mathcal{C}_\bfi$ if $v \in \{\bv\}$; the edge is labeled $S = \{i: \bv[i]=v\}$. The label of a vertex $x$ is denoted $\lbl_x$, the label of an edge $\{x,y\}$ is denoted $\lbl_{\{x,y\}}$.

We call $\bfi$ \emph{connected} if $\graph{\bfi}$ is. Similarly, we say that $\bfi'$ is a connected component of $\bfi$ if $\graph{\bfi'}$ is a connected component of $\graph{\bfi}$. 

The $k\ith$ \emph{iterated degree} of a vertex $x$, where $x$ is a variable or a constraint, is defined inductively by $\delta_0^{\bfi}(x) = \lbl_x$, and $\delta_{k+1}^{\bfi}(x) = \mult{(\lbl_{\{x,y\}},\delta_k^{\bfi}(y)) \mid y \mbox{ is adjacent to } x \textnormal{ in } \graph{\bfi}}$. The \emph{iterated degree} of a vertex $x$ is defined  as  $\delta^{\bfi}(x) = (\delta_0^{\bfi}(x),\delta_1^{\bfi}(x),\delta_2^{\bfi}(x), \ldots)$. For vertices $x$ and $y$ we write $x \eqone y$ if they have the same iterated degrees. 
Note that the iterated degrees are analogues of colors in the 1-dimensional Weisfeiler-Leman color refinement algorithm~\cite{leman1968reduction} for graph isomorphism test.
The \emph{iterated degree sequence} of $\bfi$ is defined as
$\delta(\bfi)=\mult{\delta^{\bfi}(x) \mid x \in I \cup \mathcal{C}_\bfi}$; for two $\sigma$-structures $\bfi$, $\bfj$, we write $\bfi \eqone \bfj$ if they have the same iterated degrees sequence.%
\footnote{The degree sequence is often defined to be a list. However, when looking at iterated degree it is common~\cite{Ramana1994,Scheinerman2011fractional} and more practical to use multisets instead of lists, while maintaining the terminology \textit{sequence} to highlight that we are dealing with a generalisation of the classical concept of degree sequence.} 
Notice that in order to prove that $\bfi \eqone \bfj$ it is sufficient to show that $\mult{\delta^{\bfi}(x) \mid x \in I }=\mult{\delta^{\bfj}(x) \mid x \in J}$.

The computational model for solving $\pvcsp \ab$ on a distributed network is as follows.
An input valued structure $\bfi$ is represented as a bipartite message passing network designed as $\graph{\bfi}$: we have an agent $\alpha(x)$ for every vertex $x \in I \cup \mathcal{C}_\bfi$ and the communication channels exactly correspond to edges in $\graph{\bfi}$ and have the same labels. Every agent in the network knows only the template, the threshold, the number of variables ($|I|$), the number of constraints ($|\mathcal{C}_\bfi|$), and the labels of their controlled variable and of the adjacent channels.
The agents are  anonymous, they all run the same deterministic algorithm, and the communication proceeds in synchronous rounds. 
 For a more detailed discussion on the distributed set-up, we refer the reader to \cite{ButtiD2021distributed}.
 
We say that a distributed algorithm solves an instance $(\bfi,\thresh)$ of  $\pvcsp\ab$ if the algorithm terminates and the terminating state of every process is $\yes$ if $(\bfi,\thresh)$ is a $\yes$ instance of $\pvcsp\ab$, and $\no$ if $(\bfi,\thresh)$ is a $\no$ instance of $\pvcsp\ab$. We say that a distributed algorithm solves $\pvcsp\ab$ if it solves every connected instance of $\pvcsp\ab$ (note here that it makes little sense to run a distributed algorithm on a disconnected network).

\section{Fractional homomorphisms and $\sa^1$} \label{sec:FH}

We start by stating the characterization of PVCSP templates in terms of fractional homomorphisms. The result will also be useful in the proof of Theorem~\ref{thm:BLPandSym}.

A \textit{fractional homomorphism} \cite{ThapperZ12,ViolaZ21} from $\bfa$ to $\bfb$  is a unary fractional polymorphism of $\ab$, or equivalently, a probability distribution $\mu$ over $B^{A}$ such that for every $R \in \sigma$ and every $\textbf{a} \in A^{\ar(R)}$ we have that
\begin{equation} \label{eq:FH}\sum_{f \in B^{A}} \mu(f) R^\bfb(f(\textbf{a})) \leq R^\bfa(\textbf{a}).
\end{equation}

If there exists a fractional homomorphism from $\bfa$ to $\bfb$, we say that $\bfa$ is  fractionally homomorphic to $\bfb$ and we write $\bfa \to_f \bfb$. 

The implication (\ref{item:1FH}) $\Rightarrow$ (\ref{item:2FH}) in the following proposition is a well-known and easy calculation~(see e.g.~\cite{ThapperZ12}). The converse implication appears to be new, although the proof technique via Farkas' Lemma~\cite{farkas} is standard in the VCSP area. 

\begin{proposition} \label{prop:FH}
For any two valued $\sigma$-structures $\bfa$ and $\bfb$, the following are equivalent.
\begin{enumerate}
    \item There exists a fractional homomorphism from $\bfa$ to $\bfb$. \label{item:1FH}
    \item For all non-negative finite-valued $\sigma$-structures $\bfi$, $\opt(\bfi,\bfb) \leq \opt(\bfi,\bfa)$. \label{item:2FH}
    \end{enumerate}
\end{proposition}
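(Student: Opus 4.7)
For (\ref{item:1FH})~$\Rightarrow$~(\ref{item:2FH}), I would use a direct averaging argument. Given a fractional homomorphism $\mu$ from $\bfa$ to $\bfb$ and an arbitrary map $h : I \to A$, I would view $f \circ h$ as a random map from $I$ to $B$ with $f$ drawn from $\mu$. Expanding $\val(\bfi,\bfb,f \circ h)$, swapping the finite sums, and applying (\ref{eq:FH}) pointwise with $\ba = h(\bv)$ weighted by the non-negative rational $\RI(\bv)$, I would obtain
\[
\sum_{f \in B^A} \mu(f)\,\val(\bfi,\bfb,f \circ h) \;\leq\; \val(\bfi,\bfa,h).
\]
Since some $f$ in the support of $\mu$ achieves at most the expectation, $\opt(\bfi,\bfb) \leq \val(\bfi,\bfa,h)$, and minimising over $h$ yields (\ref{item:2FH}).

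For (\ref{item:2FH})~$\Rightarrow$~(\ref{item:1FH}) I would argue by contrapositive, using LP duality in the form of Farkas' Lemma. Let $S := \{(R,\ba) : R \in \sigma,\ \ba \in A^{\ar(R)},\ \RA(\ba) < \infty\}$ and call a map $f : A \to B$ \emph{admissible} if $\RB(f(\ba)) < \infty$ for every $(R,\ba) \in S$. The existence of a fractional homomorphism is precisely the feasibility of the rational linear system with variables $\mu(f) \geq 0$ (for admissible $f$), normalization $\sum_f \mu(f) = 1$, and $\sum_f \mu(f)\, \RB(f(\ba)) \leq \RA(\ba)$ for each $(R,\ba) \in S$ (the remaining instances of (\ref{eq:FH}) being vacuous). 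Assuming infeasibility, Farkas furnishes non-negative rationals $\pi_{R,\ba}$ on $S$ and a rational constant $c$ with $\sum_{(R,\ba) \in S} \pi_{R,\ba}\,\RB(f(\ba)) \geq c$ for every admissible $f$, while $\sum_{(R,\ba) \in S} \pi_{R,\ba}\,\RA(\ba) < c$. Because there are only finitely many admissible $f$, the rational perturbation $\pi \leftarrow (1-\epsilon)\pi + \epsilon\,\pi_{\mathrm{unif}}$ with $\pi_{\mathrm{unif}}$ uniform on $S$ preserves both strict inequalities for sufficiently small rational $\epsilon > 0$, and additionally ensures $\pi_{R,\ba} > 0$ for every $(R,\ba) \in S$.

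I would then build the witnessing instance $\bfi$ with universe $I := A$ and weights $\RI(\ba) := \pi_{R,\ba}$ for $(R,\ba) \in S$ and $\RI(\ba) := 0$ otherwise; this makes $\bfi$ non-negative finite-valued. Setting $\tau := \sum_{(R,\ba) \in S} \pi_{R,\ba}\,\RA(\ba)$, the identity map gives $\opt(\bfi,\bfa) \leq \val(\bfi,\bfa,\mathrm{id}_A) = \tau$. For any $h : A \to B$, either $h$ is admissible, and the dual inequality gives $\val(\bfi,\bfb,h) = \sum_{(R,\ba) \in S} \pi_{R,\ba}\,\RB(h(\ba)) \geq c > \tau$; or $h$ is not admissible, in which case some $(R,\ba) \in S$ satisfies $\RB(h(\ba)) = \infty$, and because $\pi_{R,\ba} > 0$ by full support, $\val(\bfi,\bfb,h) = \infty > \tau$. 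Hence $\opt(\bfi,\bfb) > \tau \geq \opt(\bfi,\bfa)$, contradicting~(\ref{item:2FH}).

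The main obstacle is precisely the interaction with $\infty$-valued entries of $\bfa$ and $\bfb$. Without first enforcing that the dual multipliers $\pi$ are strictly positive on all of $S$, a non-admissible $h$ could have its $\infty$ entries multiplied by zero weights (since $0 \cdot \infty = 0$), leaving $\val(\bfi,\bfb,h)$ finite and potentially $\leq \tau$, so that $\bfi$ would fail to refute (\ref{item:2FH}). Arranging that $\pi$ has full support via the uniform perturbation above is the key technical step that circumvents this pathology and is the point at which the proof departs from the routine finite-valued case.
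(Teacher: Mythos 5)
Your proposal is correct and follows the same overall route as the paper: (\ref{item:1FH})~$\Rightarrow$~(\ref{item:2FH}) is the identical averaging computation, and (\ref{item:2FH})~$\Rightarrow$~(\ref{item:1FH}) is the same contrapositive via Farkas' Lemma, restricting attention to the pairs with $\RA(\ba)<\infty$ and to your ``admissible'' maps (the paper's $B^A_{<\infty}$), and converting the dual multipliers into a witness instance with universe $A$ on which the identity beats every $h:A\to B$. The genuine difference is your full-support perturbation of the multipliers. The paper instead asserts that the inequalities for non-admissible $f$ are ``trivially satisfied'', which is exactly the $0\cdot\infty=0$ pathology you point out: if every pair $(R,\ba)$ witnessing non-admissibility of $f$ carries zero multiplier, that inequality need not hold (in degenerate cases, e.g.\ when $B^A_{<\infty}=\emptyset$, Farkas can even return an all-zero multiplier vector), so your perturbation makes rigorous a step the paper glosses over. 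Likewise, your normalization $\sum_f\mu(f)=1$ is the faithful LP encoding of a fractional homomorphism, whereas the paper's $\sum_f\mu_f\geq 1$ is not equivalent to it when negative finite costs occur; your choice sidesteps this. One cosmetic point: what the perturbation actually preserves is the finitely many strict separations $\sum_{(R,\ba)\in S}\pi_{R,\ba}\RB(f(\ba))>\sum_{(R,\ba)\in S}\pi_{R,\ba}\RA(\ba)$ for admissible $f$; the bound ``$\geq c$'' with the same constant $c$ need not survive, so you should state the final comparison against $\tau$ (as you essentially do) rather than against $c$. With that wording fixed, the argument is complete.
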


\begin{proof}
(\ref{item:1FH}) $\Rightarrow$ (\ref{item:2FH}) 
Let $\mu$ be a fractional homomorphism from $\bfa$ to $\bfb$, let $g:I \to A$ be such that $\opt(\bfi,\bfa) = \val(\bfi,\bfa,g)$, and let $f \in B^A$ be some map that minimizes $\val(\bfi,\bfb,f\circ g)$. Then \begin{align*}
    \opt(\bfi,\bfb) & \leq \val(\bfi,\bfb,f\circ g)  \leq \sum_{f' \in B^A} \mu(f') \val(\bfi,\bfb,f'\circ g) \\
    & = \sum_{R \in \sigma} \sum_{\textbf{v} \in I^{\ar(R)}} R^\bfi(\bv) \sum_{f' \in B^A} \mu(f') R^\bfb(f' \circ g(\bv))\\
   & \leq \sum_{R \in \sigma} \sum_{\bv \in I^{\ar(R)}} R^\bfi(\bv) R^\bfa(g(\bv)) = \val(\bfi,\bfa,g) = \opt(\bfi,\bfa).
\end{align*}

$(\ref{item:2FH}) \Rightarrow (\ref{item:1FH})$. The idea for this proof is to assume that there is no fractional homomorphism from $\bfa$ to $\bfb$, formulate this fact as infeasibility of a system of linear inequalities, and then use a version of Farkas' Lemma to find $\bfi$ with $\opt\ib > \opt \ia$.  

The existence of a fractional homomorphism from $\bfa$ to $\bfb$  can be reformulated as the following system of linear inequalities, where there is a rational-valued variable $\mu_f$ for every $f \in B^A$.

\begin{empheq}[box=\fbox]{align}
   \qquad  \textnormal{variables: }& \quad \mu_f  \textnormal{ for all } f \in B^A\nonumber\\
   \qquad  \textnormal{constraints: }& \quad  \sum_{f \in B^A} \mu_f R^{\bfb}(f(\textbf{a})) \leq R^{\bfa}(\textbf{a}) \textnormal{ for all } R \in \sigma \textnormal{ and } \textbf{a} \in A^{\ar(R)} \label{eq:fh1} \qquad\\ 
& \quad \sum_{f \in B^A} \mu_f \geq 1 \nonumber\\
& \quad \mu_f \geq 0 \textnormal{ for all } f \in B^A. \nonumber
\end{empheq}
\\

If there is no fractional homomorphism from $\bfa$ to $\bfb$, this system is  infeasible. 

We now deal with infinite coefficients.
Define $B^A_{<\infty}= \{f \in B^A: \forall R \in \sigma, \forall \ba \in A^{\ar(R)} \mathbin{,} \  \RA(\ba) < \infty \textnormal{ implies } \RB(f(\ba))<\infty\}$.
Now consider the new linear system obtained from the above by first removing all the inequalities in (\ref{eq:fh1}) where $\RA(\ba)=\infty$ (since these inequalities are always satisfied), and second, by removing the variable $\mu_f$ for all $f \in B^A \setminus B^A_{<\infty}$ and changing (\ref{eq:fh1}) so that the sums run over  $B^A_{<\infty}$ only (since we need to have $\mu_f=0$ for $f \in B^A \setminus B^A_{<\infty}$ in any feasible solution). Clearly, the system of linear inequalities resulting from this procedure remains infeasible and does not contain infinite coefficients.

This system of linear inequalities can be rewritten in matrix form as $M \textbf{f} \leq \textbf{a}$ subject to $\textbf{f} \geq 0$, where $\textbf{f} \in \mathbb{Q}_{\geq}^{ B^A_{<\infty}}$ is the vector of unknowns, and $M$ is a real-valued matrix. By Farkas' Lemma, the system of inequalities $M^T \textbf{y} \geq 0$ subject to $\textbf{a}^T \textbf{y} < 0$ and $\textbf{y} \geq 0$ is feasible. Explicitly, the latter system is the following.

\begin{empheq}[box=\fbox]{align} \label{eq:BTy0}
    \qquad \textnormal{variables: }& \quad y, x_{R,\ba} \textnormal{ for every } R \in \sigma \textnormal{ and }\textbf{a} \in A^{\ar(R)} \nonumber \textnormal{ with } \RA(\ba)< \infty\\
    \qquad \textnormal{constraints: }& \quad  \sum_{R \in \sigma}\sum_{\substack{\textbf{a} \in A^{\ar(R)}\\ \RA(\ba)< \infty}} x_{R,\ba} R^{\bfb}(f(\textbf{a})) \geq y  \textnormal{ for all } f \in  B^A_{<\infty} \qquad \\ 
    & \quad \sum_{R \in \sigma} \sum_{\substack{\textbf{a} \in A^{\ar(R)}\\ \RA(\ba)< \infty}} x_{R,\ba} R^{\bfa}(\textbf{a}) < y \nonumber\\
    & \quad  x_{R,\ba} \geq 0 \textnormal{ for all } R \in \sigma , \textbf{a} \in A^{\ar(R)} \nonumber\\
    & \quad  y \geq 0. \nonumber
\end{empheq}

Eliminating $y$, and adding trivially satisfied constraints to (\ref{eq:BTy0}) for all $f \in B^A \setminus B^A_{<\infty}$, we get that the following system is feasible.

\begin{empheq}[box=\fbox]{align} \label{eq:BTy0reduced}
    \textnormal{variables: }& \quad  x_{R,\ba} \textnormal{ for every } R \in \sigma \textnormal{ and }\textbf{a} \in A^{\ar(R)} \nonumber  \textnormal{ with } \RA(\ba)< \infty\\
    \textnormal{constraints: }& \quad \sum_{R \in \sigma} \sum_{\substack{\textbf{a} \in A^{\ar(R)}\\ \RA(\ba)< \infty}} x_{R,\ba} R^{\bfb}(f(\textbf{a})) >  \sum_{R \in \sigma} \sum_{\substack{\textbf{a} \in A^{\ar(R)}\\ \RA(\ba)< \infty}} x_{R,\ba} R^{\bfa}(\textbf{a}) \textnormal{ for all } f \in  B^A \\ 
    & \quad x_{R,\ba} \geq 0 \textnormal{ for all } R \in \sigma , \textbf{a} \in A^{\ar(R)} \nonumber
\end{empheq}

Let $x_{R,\ba}$ for $R \in \sigma$, $\ba \in A^r$ be a feasible solution to (\ref{eq:BTy0reduced}), and consider the structure $\bfi$ with domain $I=A$ and relations given by $R^\textbf{I}(\textbf{a})=x_{R,\ba}$ for $\textbf{a} \in A^{\ar(R)}$ with $\RA(\ba)< \infty$ and $R^\textbf{I}(\textbf{a})=0$ whenever $\RA(\ba)= \infty$. Notice that $\bfi$ is non-negative finite-valued, that the right-hand side in the first inequality is equal to $\val(\bfi,\bfa,\mathrm{id})$, (where $\mathrm{id}$ denotes the identity function) and that the left-hand side is equal to $\val(\bfi,\bfb,f)$. 
Therefore $\opt\ib > \opt \ia$, as required.
\end{proof}

\begin{proof}[Sketch of proof of Theorem~\ref{thm:BLPandSym}]

For an integer $m \geq 1$, let $\lpam$ be the structure whose universe consists of $A$-multisets of size $m$ and whose valued relations are defined by the following formula where $R \in \sigma$ and $s_1, \ldots, s_r$ are from the universe.
 \[R^{\lpam}(s_1,\ldots,s_{r}):=\frac{1}{m} \min_{\substack{\textbf{t}_1,\ldots,\textbf{t}_r \in A^m\\ \mult{\textbf{t}_i}=s_i}} \sum_{i=1}^m R^{\bfa}(\textbf{t}_1[i],\ldots,\textbf{t}_r[i]).\]
Variants of such structures  have been defined in the literature both for (P)CSP \cite{Kun2012,Barto2021algebraicPCSP} and for VCSP \cite{ThapperZ12, ViolaZ21}. These papers also explicitly or implicitly observe the following properties. 

\begin{enumerate}
    \item $\opt^{\blp}\ia = \min_{m \geq 1}\opt(\bfi,\lpam)$ for all non-negative finite-valued $\bfi$. \label{item:lpam1}
    \item For all $m\geq 1$, $\lpam \to_f \bfb$ if and only if $\ab$ has an $m$-ary symmetric fractional polymorphism. \label{item:lpam2}
\end{enumerate}

The proof can be now finished using Proposition~\ref{prop:FH}. For \ref{item:4thmain} $\Rightarrow$ \ref{item:5thmain} suppose that $\ab$ does not have a symmetric polymorphism of some arity $m$. Then, there is no fractional homomorphism from $\lpam$ to $\bfb$. It follows from Proposition \ref{prop:FH} that there exists some structure $\bfi$ such that $\opt(\bfi,\bfb) > \opt(\bfi,\lpam) \geq \opt^{\blp}\ia$. Hence, $\blp$ does not decide $\pvcsp\ab$. 
On the other hand, for \ref{item:5thmain} $\Rightarrow$ \ref{item:4thmain}, assume that $\ab$ has symmetric fractional polymorphisms of every arity. 
Let $m \geq 1$ be such that $\opt(\bfi,\lpam)$ is minimal. We know that $\lpam$ is fractionally homomorphic to $\bfb$ and therefore for all finite-valued structures $\bfi$, $\opt(\bfi,\bfb) \leq \opt(\bfi,\lpam)=\opt^{\blp}\ia$. Hence, $\blp$ decides $\pvcsp\ab$.
\end{proof}

The decomposition theorem mentioned in the introduction uses a concept that is ``dual'' to  fractional homomorphism, as suggested by the following Proposition~\ref{prop:dualFH}. Here we only present the proof of the implication that is needed for the decomposition theorem. The proof of the other implication uses techniques similar to the ones deployed in Proposition \ref{prop:FH}, and we refer the reader to \cite{Carbonnel2022otherside} for the details.

We define a \textit{dual fractional homomorphism} from $\bfi$ to $\bfj$ ($\bfi \to_{df} \bfj$) to be a probability distribution $\eta$ over $J^I$ such that for every $R \in \sigma$ and every $\bu \in J^{\ar(R)}$ we have that
\begin{equation} \label{eq:dualFH}
 R^{\bfj}(\bu) \geq \sum_{f \in J^I} \eta(f) \sum_{\substack{\bv \in I^{\ar(R)} \\ \bu = f(\bv)}} R^{\bfi}(\bv).
\end{equation}
\begin{proposition} \label{prop:dualFH}
For any two non-negative finite-valued $\sigma$-structures $\bfi$ and $\bfj$, the following are equivalent.
\begin{enumerate}
    \item There exists a dual fractional homomorphism from $\bfi$ to $\bfj$. \label{item:1dualFH}
    \item For all valued $\sigma$-structures $\bfa$, $\opt(\bfi,\bfa) \leq \opt(\bfj,\bfa)$. \label{item:2dualFH}
    \end{enumerate}
\end{proposition}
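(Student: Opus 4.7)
The plan is to prove only the implication \ref{item:1dualFH} $\Rightarrow$ \ref{item:2dualFH}, since the paper explicitly defers the converse to~\cite{Carbonnel2022otherside}. I would mirror the opening move of the proof of (\ref{item:1FH}) $\Rightarrow$ (\ref{item:2FH}) in Proposition~\ref{prop:FH}, but with the randomization applied on the opposite side: instead of post-composing an optimal assignment with a random map drawn from the fractional homomorphism distribution, I pre-compose it with a random map drawn from the dual fractional homomorphism $\eta$.

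\textbf{Main steps.} Fix a valued $\sigma$-structure $\bfa$; we may assume $\opt(\bfj,\bfa)<\infty$, otherwise there is nothing to show. Pick $g\colon J\to A$ attaining $\opt(\bfj,\bfa)=\val(\bfj,\bfa,g)$. For each $f\in J^I$, the composition $g\circ f\colon I\to A$ is a candidate assignment for $(\bfi,\bfa)$, so it would suffice to prove
\[
\sum_{f\in J^I}\eta(f)\,\val(\bfi,\bfa,g\circ f)\ \le\ \val(\bfj,\bfa,g),
\]
because the left-hand side is a convex combination of the values $\val(\bfi,\bfa,g\circ f)$, hence some summand is bounded by the right-hand side, which in turn bounds $\opt(\bfi,\bfa)$. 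To verify this inequality I would expand the left-hand side by the definition of $\val$ and then swap the order of summation over $f$ and $\bv\in I^{\ar(R)}$, regrouping the terms according to the value $\bu:=f(\bv)\in J^{\ar(R)}$. This brings the inner expression $\sum_{f}\eta(f)\sum_{\bv:f(\bv)=\bu}R^{\bfi}(\bv)$ into view, which by the defining inequality~\eqref{eq:dualFH} of a dual fractional homomorphism is at most $R^{\bfj}(\bu)$. Multiplying by $R^{\bfa}(g(\bu))$ and summing over $R\in\sigma$ and $\bu\in J^{\ar(R)}$ then reassembles exactly $\val(\bfj,\bfa,g)$, giving the desired bound.

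\textbf{Main care point.} The only subtlety I foresee is that $\bfa$ may take the value $\infty$, while $\bfi$ and $\bfj$ are assumed non-negative finite-valued. Under the paper's conventions $0\cdot\infty=0$ and $c\cdot\infty=\infty$ for $c>0$, the rearrangement and the multiplication of~\eqref{eq:dualFH} by the non-negative extended rational $R^{\bfa}(g(\bu))$ preserve the inequality termwise. For any $\bu$ with $R^{\bfa}(g(\bu))=\infty$, the standing assumption $\val(\bfj,\bfa,g)<\infty$ forces $R^{\bfj}(\bu)=0$, which via~\eqref{eq:dualFH} forces $\sum_{\bv:f(\bv)=\bu}R^{\bfi}(\bv)=0$ for every $f$ in the support of $\eta$; such tuples $\bu$ therefore contribute $0$ to both sides by the same convention and cause no difficulty. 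With these technical checks in place, the argument is a one-line averaging, so I do not anticipate any deeper obstacle.
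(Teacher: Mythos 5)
Your proposal is correct and follows essentially the same route as the paper: average $\val(\bfi,\bfa,g\circ f)$ over $f\sim\eta$ for an optimal $g\colon J\to A$, regroup by $\bu=f(\bv)$, apply the defining inequality~\eqref{eq:dualFH} termwise, and conclude by picking a single $f'$ in the support. Your explicit handling of the $\infty$-valued entries of $\bfa$ is a welcome elaboration of the paper's passing remark that the argument works for general-valued $\bfa$, but it is not a different method.
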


\begin{proof}
$(\ref{item:1dualFH}) \Rightarrow (\ref{item:2dualFH})$. Let $\eta$ be a dual fractional homomorphism from $\bfi$ to $\bfj$, and $g:J \to A$ be such that $\opt(\bfj,\bfa) = \val(\bfj,\bfa,g)$. Then
\begin{align*}
    \opt(\bfj,&\bfa)  =  \sum_{R \in \sigma} \sum_{\bu \in J^{\ar(R)}} R^\bfj(\bu) R^\bfa(g(\bu)) \\
     &\geq \sum_{R \in \sigma} \sum_{\bu \in J^{\ar(R)}} \sum_{f \in J^I} \eta(f) \sum_{\substack{\bv \in I^{\ar(R)} \\ \bu = f(\bv)}} R^{\bfi}(\bv)  R^\bfa(g \circ f(\bv))  
     = \sum_{f \in J^I} \eta(f) \val(\bfi,\bfa,g \circ f),
\end{align*}
which implies that there exists some function $f':I \to J$ such that $ \val(\bfi,\bfa,g \circ f') \leq  \opt(\bfj,\bfa) $, hence $ \opt(\bfi,\bfa) \leq  \opt(\bfj,\bfa)$ as required. Notice that this holds regardless of whether $\bfa$ is finite-valued or general-valued. 
\end{proof}

\section{The decomposition theorem}

In this section we state and prove the decomposition theorem. This provides a connection between the combinatorial and the LP-based characterizations of the class of $\pvcsp$ templates that are the subject of our main result, and thus is a fundamental step in the proof of Theorem \ref{th:main}, namely the implication \ref{item:2thmain} $\Rightarrow$ \ref{item:3thmain}.
We refer to~\cite{ButtiD21fractional} for a more detailed discussion about (a weaker form of) this result.

\begin{theorem} \label{thm:Decomp}
Let $\bfi$, $\bfa$ be a pair of similar valued structures, where $\bfi$ is non-negative and finite-valued. 
Then there exist non-negative finite-valued structures $\bfy_1,\bfy_2$ such that
\begin{enumerate}
    \item $\bfi \to_{df} \bfy_1$,
    \item $\bfy_1 \eqone \bfy_2$, and
    \item $\opt(\bfy_2,\bfa)\leq \opt^{\sa^1}\ia$.
\end{enumerate}
\end{theorem}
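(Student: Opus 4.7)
If $\opt^{\sa^1}\ia = \infty$, taking $\bfy_1 = \bfy_2 = \bfi$ satisfies all three conditions trivially (with the identity witnessing $\bfi \to_{df} \bfi$), so assume $\sa^1\ia$ is feasible, fix a rational optimal solution $\{p_v(a)\}, \{p_{R(\bv)}(\ba)\}$, and pick an integer $N \geq 1$ such that all scaled values $n_v(a) := Np_v(a)$ and $n_{R(\bv)}(\ba) := Np_{R(\bv)}(\ba)$ are integers. I would build both $\bfy_1$ and $\bfy_2$ on the common universe $I \times [N]$, thought of as $N$ parallel copies of the variable set.

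The structure $\bfy_1$ is simply the disjoint union of $N$ weight-scaled copies of $\bfi$: for each $C = R(\bv) \in \mathcal{C}_\bfi$ and each $j \in [N]$, I add the constraint $R((\bv[1],j),\ldots,(\bv[\ar(R)],j))$ of weight $R^\bfi(\bv)/N$. For $\bfy_2$ I first fix a ``coloring'' $\phi: I \times [N] \to A$ such that, for every $v$, the multiset $\mult{\phi(v,j) \mid j \in [N]}$ contains each $a$ exactly $n_v(a)$ times (possible by~(\ref{eq:SA1})); then, for each $C = R(\bv) \in \mathcal{C}_\bfi$, I enumerate a list $\ba^{(1)},\ldots,\ba^{(N)}$ of tuples in $A^{\ar(R)}$ in which each $\ba$ appears exactly $n_{R(\bv)}(\ba)$ times, and for each variable $v \in \{\bv\}$ I select a permutation $\sigma^C_v: [N] \to [N]$ satisfying $\phi(v,\sigma^C_v(j)) = \ba^{(j)}[i]$ for every $j$ and every position $i$ with $\bv[i] = v$. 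Constraint~(\ref{eq:SA2}) ensures that the two relevant color multisets coincide, so $\sigma^C_v$ exists, and constraint~(\ref{eq:SA4}) ensures that $\ba^{(j)}[i]$ is the same for all $i$ with $\bv[i] = v$, so the defining condition is unambiguous. The structure $\bfy_2$ then contains, for each such $C$ and $j$, the constraint $R((\bv[1],\sigma^C_{\bv[1]}(j)),\ldots,(\bv[\ar(R)],\sigma^C_{\bv[\ar(R)]}(j)))$ of weight $R^\bfi(\bv)/N$; informally, $\bfy_2$ is a ``rewiring'' of $\bfy_1$ that respects the coloring $\phi$.

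The three verifications are short. For condition~(3), the map $h(v,j) := \phi(v,j)$ satisfies $\val(\bfy_2,\bfa,h) = \sum_C R^\bfi(\bv) \sum_\ba p_{R(\bv)}(\ba) R^\bfa(\ba) = \opt^{\sa^1}\ia$ by the choice of $\ba^{(j)}$. For condition~(1), let $\eta$ be uniform on the $N$ maps $f_j : v \mapsto (v,j)$; for any tuple $\bu = ((v_1,j_1),\ldots,(v_r,j_r))$, both sides of the inequality defining $\bfi \to_{df} \bfy_1$ (cf.~(\ref{eq:dualFH})) equal $R^\bfi(v_1,\ldots,v_r)/N$ when $j_1 = \cdots = j_r$ and $0$ otherwise. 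For condition~(2), both $\bfy_1$ and $\bfy_2$ can be viewed as ``$N$-fold unfoldings'' of $\bfi$: around any vertex, the labeled adjacency pattern is isomorphic to that around the corresponding vertex of $\bfi$ (for $\bfy_2$ this uses the fact that each $\sigma^C_v$ is a bijection indexed by the variable $v$, so repeated variables in $\bv$ produce repeated entries in the constraint tuple). A routine induction on $k$ then yields $\delta_k^{\bfy_1}(x) = \delta_k^{\bfy_2}(x)$ for every vertex $x$, hence $\bfy_1 \eqone \bfy_2$.

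The main obstacle is the construction of the permutations $\sigma^C_v$, which is where the full strength of the $\sa^1$ LP is used: constraint~(\ref{eq:SA2}) is what makes the relevant multisets match, while constraint~(\ref{eq:SA4})---the one present in $\sa^1$ but absent from $\blp$---is what makes the definition consistent at repeated variables, explaining why the same argument with $\blp$ in place of $\sa^1$ would fail.
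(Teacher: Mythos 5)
Your proposal is correct and follows essentially the same route as the paper's proof: scale an optimal $\sa^1$ solution to integers, take $\bfy_1$ as a weight-scaled disjoint union of $N$ copies of $\bfi$ (with the uniform distribution on the $N$ canonical injections as the dual fractional homomorphism), and build $\bfy_2$ by rewiring each constraint's copies via permutations obtained from the $p_{R(\bv)}$ distributions, using (\ref{eq:SA1})--(\ref{eq:SA2}) for existence and (\ref{eq:SA4}) for consistency at repeated variables. Your coloring $\phi$ and permutations $\sigma^C_v$ correspond exactly to the paper's tuples $\bp_v$ and permutations $\rho_i$ (indexed by positions rather than variables), so the two arguments coincide up to notation.
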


\begin{proof}
If $\saia$ is not feasible, then we can take $\bfy_1=\bfy_2=\bfi$, and the statement follows trivially, so from now on we shall assume that $\saia$ is feasible.
Let  $p_v(a)$, $p_{R(\bv)}(\ba)$ form an optimal solution of $\saia$ and let $m>0$ be an integer such that all the values $mp_v(a)$ and $mp_{R(\bv)}(\ba)$ are integers. Note that these integers are non-negative by (\ref{eq:SA0}) and  (\ref{eq:SA2}).

We define the universe of both valued structures $\bfy_1$ and $\bfy_2$ as $Y_1=Y_2 = [m] \times I$. The valued structure $\bfy_1$ is simply a ``scaled disjoint union'' of $m$ copies of $\bfi$: we set $R^{\bfy_1}((k,\bv[1])\mathbin{,} (k,\bv[2])\mathbin{,} \dots\mathbin{,} (k,\bv[\ar(R)]))=1/m \cdot R^{\bfi}(\bv)$ for every $k \in [m]$, $\bv \in I^{\ar(R)}$, and the weight of the remaining tuples is set to 0. Observe that $\bfi \to_{df} \bfy_1$ by the dual fractional homomorphism given by the uniform distribution over $f_k$, $k \in [m]$, where $f_k: I \to Y_1$ is defined by $f_k(v) = (k,v)$ for all $v \in I$. Also notice that the iterated degree of each $(k,v)$ is obtained from the iterated degree of $v$ by scaling down each constraint label $(R,q)$ to $(R,q/m)$. 

The structure $\bfy_2$ is a ``twisted'' version of $\bfy_1$ (the construction is a version of the twisted product  from~\cite{Kun13}).
For every $v \in I$, fix a tuple $\bp_v \in A^m$ in which $a \in A$ appears exactly $mp_v(a)$ times -- note that this is possible since the $mp_v(a)$ sum up to $m$ by (\ref{eq:SA1}). We define $h: Y_2 \to A$ by $h(k,v) = \bp_v[k]$ for all $k \in [m]$ and $v \in I$. The structure $\bfy_2$ is constructed so that the value of $h$ for $(\bfy_2,\bfa)$ is $\opt^{\sa^1}\ia$, as follows. For every $R(\bv) \in \mathcal{C}_\bfi$, denote $r=\ar(R)$, and consider an $m \times r$ matrix $Q$ that has, for each $\ba \in A^{r}$, exactly $mp_{R(\bv)}(\ba)$ rows equal to $\ba$. Note that the $i\ith$ column contains $a \in A$ exactly $mp_{\bv[i]}(a)$ times by (\ref{eq:SA2}), in other words, the multiset of elements of this columns is equal to $\mult{\bp_{\bv[i]}}$; in particular, $Q$ indeed has $m$ rows. Moreover, if $\bv[i] = \bv[j]$, then the columns $i$ and $j$ are identical by (\ref{eq:SA4}). It follows that there are permutations $\rho_1, \dots, \rho_r: [m] \to [m]$ such that 
\begin{itemize}
  \item for every $k \in [m]$, the $k\ith$ row of $Q$ is equal to 
  $(\bp_{\bv[1]}[\rho_1(k)], \bp_{\bv[2]}[\rho_2(k)], \dots, \bp_{\bv[r]}[\rho_r(k)])$; 
  \item for every $i,j \in [r]$, if $\bv[i]=\bv[j]$ then $\rho_i=\rho_j$.
\end{itemize}
We set $R^{\bfy_2}((\rho_1(k),\bv[1]), (\rho_2(k),\bv[2]), \dots, (\rho_{r}(k),\bv[r]))=1/m \cdot R^{\bfi}(\bv)$ for every $k \in [m]$. After running through all $R(\bv) \in \mathcal{C}_{\bfi}$ we set the remaining weights to 0.
The weights of those tuples that correspond to $R(\bv)$ were selected so that their contribution to $\val(\bfy_2,\bfa,h)$ is equal to the inner sum in the $\sa^1$ objective function (\ref{eq:objBLP}); therefore, the total value of $h$ is equal to $\opt^{\sa^1}\ia$. It follows that $\opt(\bfy_2,\bfa)\leq \opt^{\sa^1}\ia$. Moreover, the iterated degree  of a pair $(k,v)$ in $\bfy_2$ is the same as in $\bfy_1$ (note here that the second item above guarantees that repeated entries are handled correctly). It follows that $\bfy_1 \eqone \bfy_2$, and the proof is concluded.
\end{proof}

The dual fractional homomorphism $\bfi \to_{df} \bfy_1$, the equivalence $\bfy_1 \eqone \bfy_2$, and assignments $Y_2 \to A$ that witness that $\opt(\bfy_2,\bfa) \leq \opt^{\sa^1}\ia$ from the proof of Theorem~\ref{thm:Decomp} can all be naturally associated with rational matrices (of dimensions $I \times Y_1$, $Y_1 \times Y_2$, and $Y_2 \times A$, respectively). It can be calculated that the product of these matrices is a matrix associated to a solution to the $\saia$ linear program. This is why we regard Theorem~\ref{thm:Decomp} as a decomposition theorem.

\section{Main result}
We are ready to prove the main result.   
The appropriate generalization of invariance  under $\eqone$ (item~\ref{item:2thmold} in Theorem~\ref{thm:old}) is that if $\bfi \eqone \bfj$ and $\thresh \in \mathbb{Q}$, then it cannot happen that $(\bfi,\thresh)$ is a $\yes$-instance while $(\bfj,\thresh)$ is a $\no$-instance. Item \ref{item:2thmain} in the following theorem is a reformulation of this requirement. 

\begin{theorem} \label{th:main}
Let $\ab$ be a promise valued template of signature $\sigma$. Then the following are equivalent.
\begin{enumerate}[label=\textnormal{(\roman*)}]
\item There exists a distributed algorithm that solves $\pvcsp\ab$. 
    Moreover, in such a case, there is a polynomial-time distributed algorithm that solves $\pvcsp\ab$.
    \label{item:1thmain}
    \item For all finite-valued $\sigma$-structures $\bfi, \bfj$, if $\bfi \eqone \bfj$ then $\opt(\bfj, \bfb) \leq \opt(\bfi, \bfa)$. \label{item:2thmain}
    \item $\sa^1$ decides $\pvcsp\ab$. \label{item:3thmain}
\end{enumerate}
\end{theorem}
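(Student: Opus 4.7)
The plan is to prove the three implications \ref{item:1thmain}$\Rightarrow$\ref{item:2thmain}$\Rightarrow$\ref{item:3thmain}$\Rightarrow$\ref{item:1thmain}.

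For \ref{item:2thmain}$\Rightarrow$\ref{item:3thmain}, fix a non-negative finite-valued input $\bfi$ and apply Theorem~\ref{thm:Decomp} to obtain $\bfy_1,\bfy_2$ with $\bfi \to_{df} \bfy_1$, $\bfy_1 \eqone \bfy_2$, and $\opt(\bfy_2,\bfa) \leq \opt^{\sa^1}\ia$. Proposition~\ref{prop:dualFH} converts the first property into $\opt(\bfi,\bfb) \leq \opt(\bfy_1,\bfb)$. Using the symmetry of $\eqone$, hypothesis \ref{item:2thmain} applied to the pair $(\bfy_2,\bfy_1)$ yields $\opt(\bfy_1,\bfb) \leq \opt(\bfy_2,\bfa)$. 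Concatenating these three inequalities gives $\opt(\bfi,\bfb) \leq \opt^{\sa^1}\ia$, which is exactly what it means for $\sa^1$ to decide $\pvcsp\ab$.

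For \ref{item:1thmain}$\Rightarrow$\ref{item:2thmain}, suppose for contradiction that there are $\bfi \eqone \bfj$ with $\opt(\bfj,\bfb) > \opt(\bfi,\bfa)$, and set $\thresh = \opt(\bfi,\bfa)$; then $(\bfi,\thresh)$ is a $\yes$-instance and $(\bfj,\thresh)$ is a $\no$-instance of $\pvcsp\ab$. When $\bfi$ and $\bfj$ are connected, an induction on the round number $k$ shows that two agents with equal $k\ith$ iterated degree end round $k$ in identical local states: the base case holds because the only initial data an agent receives is the label of its vertex and of its incident edges, and the inductive step uses the fact that the algorithm is deterministic and anonymous, so equal multisets of incoming messages produce equal outgoing messages. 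Consequently the algorithm must return the same $\yes$/$\no$ output on $(\bfi,\thresh)$ and on $(\bfj,\thresh)$, contradicting its correctness. The disconnected case is handled by observing that $\bfi \eqone \bfj$ forces a pairing of the connected components of $\bfi$ and $\bfj$ (possibly after replacing them by suitable disjoint-union multiples) so that paired components are themselves $\eqone$-equivalent and connected, and then applying the connected argument component-wise; this addresses the subtlety flagged in the introduction.

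The implication \ref{item:3thmain}$\Rightarrow$\ref{item:1thmain} is the most involved, and will be the main obstacle. The plan is to design a polynomial-time distributed algorithm that directly computes an adjusted form of $\opt^{\sa^1}\ia$. The key structural observation is that every inequality (\ref{eq:SA0})--(\ref{eq:SA4}) of the $\sa^1$ program and every summand of the objective (\ref{eq:objBLP}) is supported on a single constraint node $R(\bv) \in \mathcal{C}_\bfi$ together with its adjacent variable nodes in $\graph{\bfi}$, so the LP is intrinsically local. The algorithm begins by running color refinement for at most $|I|+|\mathcal{C}_\bfi|$ rounds, after which every agent knows its own $\eqone$-class; one then exploits the fact that the $\sa^1$ program admits an optimal solution that is constant on $\eqone$-classes (obtained by symmetrizing any optimum over the iterated-degree-preserving symmetries of $\graph{\bfi}$) to reduce it to a polynomially smaller LP indexed by classes. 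Each agent can reconstruct this reduced LP from its class, the fixed template, and the globally known counts $|I|$ and $|\mathcal{C}_\bfi|$ together with class frequencies accumulated by further local message passing, and it then solves the LP deterministically to decide whether the optimum is at most $\thresh$. The hard part is to guarantee that every agent arrives at precisely the same reduced LP and the same computed optimum in polynomially many rounds of the anonymous synchronous model; unlike \cite{ButtiD2021distributed}, this approach avoids the deep result of \cite{KozikConsistency} and extends uniformly to the PVCSP setting.
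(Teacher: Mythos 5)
Your cycle of implications is the same as the paper's, and your \ref{item:2thmain}$\Rightarrow$\ref{item:3thmain} step (Theorem~\ref{thm:Decomp} plus Proposition~\ref{prop:dualFH}, applying \ref{item:2thmain} to the pair $(\bfy_2,\bfy_1)$) is correct and essentially identical to the paper's. The other two implications, however, each have a genuine gap. For \ref{item:1thmain}$\Rightarrow$\ref{item:2thmain}, the connected case is fine, but your treatment of disconnected instances does not work: $\bfi \eqone \bfj$ does \emph{not} force a pairing of connected components into $\eqone$-equivalent pairs (for a single binary symbol, the analogue of a $6$-cycle is $\eqone$-equivalent to the disjoint union of two $3$-cycles, yet no component of one is $\eqone$-equivalent to a component of the other), and ``replacing components by suitable disjoint-union multiples'' destroys connectedness, so the connected argument — the only thing a distributed algorithm gives you, since by definition it only solves connected instances — cannot be applied to those multiples. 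The paper closes exactly this hole by introducing \emph{weakly congruent} structures ($|J|\cdot\delta(\bfi)=|I|\cdot\delta(\bfj)$) and building, for each component, a \emph{connected} twisted structure $\bfi^{(k)}$ (via cyclic shifts) that admits dual fractional homomorphisms in both directions to $\bfi$, so that $\opt$ is preserved, and such that for large $k'$ the connected structures $\bfi^{(k'|J|)}$ and $\bfj^{(k'|I|)}$ are $\eqone$-equivalent; this yields the scaled inequality $\opt(\bfj',\bfb)/|J'|\le\opt(\bfi',\bfa)/|I'|$ for weakly congruent connected components, which is then summed. Some construction of this kind (a connected surrogate for the scaled disjoint union) is indispensable and is missing from your argument.

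For \ref{item:3thmain}$\Rightarrow$\ref{item:1thmain} your high-level plan (reduce $\sa^1\ia$ to an LP indexed by $\eqone$-classes and have every agent reconstruct and solve it) is the paper's plan, but two points are not established. First, your justification that an optimal solution constant on $\eqone$-classes exists ``by symmetrizing over the iterated-degree-preserving symmetries of $\graph{\bfi}$'' is flawed: $\eqone$-classes are in general strictly coarser than automorphism orbits (e.g.\ regular graphs with trivial automorphism group), so averaging over symmetries need not produce a class-constant solution; one must instead average the LP values over each class directly and verify feasibility, which is the content of the result the paper imports from \cite{ButtiD2021distributed}. Second, and more seriously, the step you yourself label ``the hard part'' — that anonymous agents agree on one and the same reduced LP, in particular that they compute the class-size coefficients $k_{[R(\bv)]}$ in the objective (\ref{eq:objSA1equiv}) — is left unsolved, and it is the actual crux of this direction. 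The paper's resolution is the ratio trick: a variable $v$ adjacent to $R(\bv)$ via an edge labeled $S$ can count $|v[S,R(\bv)]|$, and $k_{[R(\bv)]}=|v[S,R(\bv)]|\cdot|[v]|$, so $v$ can compute the \emph{ratios} of the coefficients of all constraint classes it participates in; broadcasting these ratios along the connected network lets every agent recover all pairwise ratios, and normalizing by the known quantity $|\mathcal{C}_\bfi|=\sum k_{[R(\bv)]}$ yields the coefficients themselves. Without this (or an equivalent mechanism) your distributed algorithm is not fully specified, so the implication is not proved.
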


\begin{proof}
\ref{item:1thmain} $\Rightarrow$ \ref{item:2thmain}. From the nature of the distributed model, it follows that agents with the same iterated degree will be in the same state at any time during the execution of any distributed algorithm. Therefore, if \ref{item:1thmain} holds and $\bfi \eqone \bfj$ are connected, then a terminating distributed algorithm will report the same decision when run on input $(\bfi,\thresh)$ or $(\bfj,\thresh)$ (see Proposition 2.2 and Corollary 2.3 in \cite{ButtiD2021distributed}), so by setting $\thresh=\opt\ia$ we obtain that \ref{item:2thmain} holds for all connected $\bfi$, $\bfj$. We now show how \ref{item:2thmain} in its full generality follows from \ref{item:2thmain} restricted to connected $\bfi$ and $\bfj$. 

Let us call two finite-valued $\sigma$-structures $\bfi$ and $\bfj$  \emph{weakly congruent} if $|J| \cdot \delta(\bfi) = |I| \cdot \delta(\bfj)$.
We claim that $\opt(\bfj,\bfb)/|J| \leq \opt(\bfi,\bfa)/|I|$ whenever $\bfi$ and $\bfj$ are weakly congruent and connected. The claim clearly holds when $|I|=1$ or $|J|=1$, so assume $|I|,|J| \geq 2$.
For any positive integer $k$, we define connected finite-valued $\sigma$-structures $\bfi'^{(k)}$ and $\bfi^{(k)}$ (and similarly $\bfj'^{(k)}$, $\bfj^{(k)}$) as follows. Let $I = \{v_0, v_1, \dots, v_{|I|-1}\}$ and  let the universe of $\bfi'^{(k)}$ be $\{0,1, \dots, k-1\} \times I$. Let $\eta$ be the probability distribution over the mappings $I \to I^{(k)}$ assigning probability $1/2k$ to each of the $2k$ mappings $f_j$, $f'_j$, $j \in \{0, 1, \dots, k-1\}$, where $f_j(v_i) = (j,v_i)$ and $f_j'(v_i) = (i+j \mod k, v_i)$ for each $v_i \in I$. We define the weights in $\bfi'^{(k)}$ in the unique way so that (\ref{eq:dualFH}) holds for $\eta$ with equality instead of inequality. 
Then $\eta$ is a dual fractional homomorphism from $\bfi$ to $\bfi'^{(k)}$, and the probability distribution which assigns probability 1 to the projection onto $I$ is a dual fractional homomorphism in the opposite direction. By Proposition~\ref{prop:dualFH}, $\opt(\bfi,\bfc)=\opt(\bfi'^{(k)},\bfc)$ for any valued $\sigma$-structure $\bfc$. Finally, let $\bfi^{(k)}$ be the valued $\sigma$-structure obtained from $\bfi'^{(k)}$ by multiplying weights by $2k$; clearly,  $\opt(\bfi^{(k)},\bfc) = 2k \opt(\bfi'^{(k)},\bfc)$ for any $\bfc$.
It follows from the construction that $\bfi^{(k)}$ is connected. Moreover, if $k$ is large enough ($k \geq |I|$ suffices), then the iterated degree of $(j,v_i)$ in $\bfi^{(k)}$ is obtained from the iterated degree of $v_i$ in $\bfi$ by multiplying all the variable multisets in each of the elements of $\delta^\bfi(v_i)$ by 2 (in each inductive step in the definition of iterated degree).
It follows that, for all $k'$, the valued structures $\bfj^{(k'|I|)}$ and $\bfi^{(k'|J|)}$ are connected and, when $k'$ is large enough, have the same iterated degree.  By item~\ref{item:2thmain} for connected valued structures, we get   
$\opt(\bfj^{(k'|I|)},\bfb) \leq \opt(\bfi^{(k'|J|)},\bfa)$ and the claim follows using the equalities above and rearranging.

Before finishing the proof, notice a simple consequence of the definition of iterated degrees. For a ``variable vertex'' $x$ of $\graph{\bfi}$, a label $S$, and a ``constraint vertex'' $y$ such that $x$ and $y$ are adjacent in $\graph{\bfi}$, denote $x[S,y] = \{y' \mid \delta(y')=\delta(y), \lbl_{\{x,y'\}} = S\}$. Observe that if there exists an edge between $x$ and $y$ labeled $S$, then $\{x'[S,y] \mid \delta(x') = \delta(x)\}$ is a collection of mutually disjoint sets of equal size, which cover $\{y' \mid \delta(y)=\delta(y')\}$; and, moreover, the same claim holds  when $x'$ and $y'$ are restricted to the connected component containing $x$ (or $y$). It follows that for a component $\bfi'$ of $\bfi$ and a component $\bfj'$ of $\bfj$, where $\bfi \eqone \bfj$, either the iterated degrees $\delta(\bfi')$ and $\delta(\bfj')$ are disjoint, or $\bfi'$ and $\bfj'$ are weakly congruent.

This observations allows us to finish the proof as follows. Let $\bfi$ and $\bfj$ be finite-valued $\sigma$-structures such that $\bfi \eqone \bfj$ and let $n=|I|=|J|$. Then there are sequences  $(\bfi_1, \dots, \bfi_n)$ and $(\bfj_1, \dots, \bfj_n)$ such that the first (resp., second) sequence contains each connected component $\bfi'$ of $\bfi$ (resp., $\bfj'$ of $\bfj$) exactly $|I'|$ times (resp., $|J'|$ times), and
$\bfi_i$ and $\bfj_i$ are weakly congruent for every $i \in [n]$. From the claim above, we get $\opt(\bfj_i,\bfb)/|J_i| \leq \opt(\bfi_i,\bfa) / |I_i|$ for every $i \in [n]$. Summing up these inequalities and observing that $\opt(\bfi,\bfa)$ is equal to the sum of $\opt(\bfi',\bfa)$ over all connected components $\bfi'$ of $\bfi$ (and similarly for $\bfj$), item~\ref{item:2thmain} now follows.

 \ref{item:2thmain} $\Rightarrow$ \ref{item:3thmain}. 
 We need to show that for every non-negative finite-valued $\sigma$-structure $\bfi$, $\opt \ib \leq \opt^{\sa^1}\ia$.
 Let $\bfy_1, \bfy_2$ be the structures obtained from Theorem~\ref{thm:Decomp}, i.e., $\yone \eqone \ytwo$,  $\opt(\bfy_2,\bfa) \leq \opt^{\sa^1}\ia$, and there is a dual fractional homomorphism from $\bfi$ to $\bfy_1$. Then, by \ref{item:2thmain} we have that $\opt(\bfy_1,\bfb) \leq \opt^{\sa^1}\ia$, and by Proposition \ref{prop:dualFH}, $\opt(\bfi,\bfb) \leq \opt^{\sa^1}\ia$ too, as required.

\ref{item:3thmain} $\Rightarrow$ \ref{item:1thmain}.
From  Theorem 3.2 in \cite{ButtiD2021distributed} (adapted to the valued setting), if $\opt^{\sa^1}\ia < \infty$, then there is a solution to the linear program that assigns the same value to every class of variables and constraints of $\bfi$ that have the same iterated degree.%
\footnote{We remark here that this theorem also has a substantially simpler proof -- it is enough to observe that averaging over variables and constraints with the same iterated degrees does not increase the objective function.}  
This allows us to reduce the linear program as follows.
Let $I\quot$ and $\mathcal{C}_{\bfi}\quot$ denote  the sets of equivalence classes of variables and constraints, respectively, under the equivalence $\eqone$. 
The new linear program, denoted $\sa^1_\equiv\ia$, contains one variable $p_{[v]}(a)$ for every class $[v] \in I\quot$ and one variable $p_{[R(\textbf{v})]}(\textbf{a})$ for every class $[R(\textbf{v})] \in \mathcal{C}_\bfi\quot$. The variables of the new program $\sa^1_\equiv\ia$ are subject to the same constraints as in $\sa^1\ia$, except they use the new reduced set of variables. The new objective function is 
\begin{equation}
\opt^{\sa^1_{\equiv}}\ia := \min \sum_{[R(\bv)] \in \mathcal{C}_\bfi\quot} k_{[R(\bv)]} \sum_{\textbf{a} \in A^{\ar(R)}}  p_{[R(\textbf{v})]}(\textbf{a})  R^\bfi(\textbf{v}) R^\bfa(\textbf{a}), \label{eq:objSA1equiv} \end{equation}
where $k_{[R(\bv)]} =|[R(\bv)]|$ is the number of constraints equivalent to $R(\bv)$. By the above discussion, we have $\opt^{\sa^1_{\equiv}}\ia = \opt^{\sa^1}\ia$.
Therefore, since $\sa^1$ decides $\pvcsp\ab$, so does $\sa^1_\equiv$.
(We remark here that two input structures with the same iterated degree have the same reduced $\sa^1_{\equiv}$ up to renaming of variables; this can be used e.g. to show  that \ref{item:3thmain} implies \ref{item:2thmain}.)

In order to show that \ref{item:3thmain} implies \ref{item:1thmain}, assume that $\bfi$ is a connected input structure. We show that  every agent in the distributed network can obtain the reduced linear program of $\sa^1_{\equiv}$ via a polynomial-time distributed algorithm. As $\sa^1_\equiv$ decides $\pvcsp\ab$, this will conclude the proof.

The agents can calculate their iterated degree (or, rather, a finite and effectively computable representation thereof) in polynomial time using a simple distributed version of the color refinement algorithm. Each agent $\alpha(x), x \in I \cup \mathcal{C}_\bfi$ can then use the representation of the iterated degree as an identifier, see~\cite[Lemma 4.8.]{ButtiD2021distributed} for a more detailed discussion.
 Every agent can obtain sufficient information from its neighbours to compute the equations in (\ref{eq:SA1}), (\ref{eq:SA2}), (\ref{eq:SA3}) and (\ref{eq:SA4}) that constrain its relevant LP variables of the reduced system (and use the identifiers to name the LP variables), and can subsequently broadcast these along the network. We are left to show that every agent can also compute the objective function of $\sa^1_{\equiv}\ia$. In fact, it is sufficient that every agent $\alpha(R(\bv))$ computes the summand of $\opt^{\sa^1_{\equiv}}\ia$ that corresponds to $[R(\bv)]$ and then broadcasts it in order to obtain the complete objective function. The only nontrivial piece of information to compute is the value of the coefficients $k_{[R(\bv)]}$.

By the observation made in the proof of \ref{item:1thmain} $\Rightarrow$ \ref{item:2thmain}, for each $R(\bv) \in \mathcal{C}_{\bfi}$, a participating variable $v \in \{\bv\}$, and $S = \lbl_{\{v,R(\bv)\}}$, the coefficient $k_{[R(\bv)]}$ is equal to the number of $S$-labeled edges from  $v$ into members of $[R(\bv)]$  (denoted $v[S,R(\bv)]$ above) multiplied by the size of $[v]$.
The former value can be computed by $\alpha(v)$, so $\alpha(v)$ can compute the ratio $k_{[R(\bv_1)]}:k_{[R(\bv_2)]}$ for any two constraints $R(\bv_1), R(\bv_2)$ that $v$ participates in. After broadcasting all these ratios, each agent can compute the ratios between any two $k_{[R(\bv)]}$ and, since the sum of these coefficients is $|\mathcal{C}_{\bfi}|$ (which is known to the agents),
they can compute the coefficients.
\end{proof}
Clearly, the implication \ref{item:4thmain} $\Rightarrow$ \ref{item:3thmain} in Theorem~\ref{thm:old} remains true for PVCSP (so the equivalent statements in Theorem~\ref{th:main} are satisfied in, e.g., the PVCSPs in Example~\ref{ex:pvcsp}). The following example shows that, unlike for PCSPs, the converse implication does not hold in general: we provide an example of a $\pvcsp$ template that is decided by $\sa^1$ but not by $\blp$.

\begin{example} \label{ex:SAnotBLP}
Let $\bfa$, $\bfb$ be $\sigma$-structures where $\sigma$ contains a single binary relation symbol $R$. Let $A=B=\{0,1\}$,  $R^\bfa(a,a)=R^\bfb(a,a)=3$ for $a \in \{0,1\}$, and $R^\bfa(a,b)=2$, $R^\bfb(a,b)=0$ for $a \neq b \in \{0,1\}$. The probability distribution which assigns probability 1 to the identity function is a fractional homomorphism, and so $\ab$ is a $\pvcsp$ template.

We claim that $\blp$ does not decide $\pvcsp\ab$. Indeed, let $\bfi$ be the $\pvcsp$ input structure given by $I=\{v\}$ and $R^{\bfi}(v,v)=1$. Then, there is a feasible solution to $\blp\ia$ given by $p_v(a)=1/2$ for $a \in \{0,1\}$ and $p_{R(v,v)}(a,a)=0$, $p_{R(v,v)}(a,b)=1/2$ for $a \neq b \in \{0,1\}$. This solution witnesses that $\opt^\blp\ia \leq 2$, however, it is easy to see that $\opt\ib =3$ and so $\blp$ does not decide $\pvcsp\ab$.

On the other hand, we show that  $\opt\ib \leq \opt^{\sa^1}\ia$ for any input valued structure $\bfi$. Let $V_l(\bfi) = \sum_{v \in I} R^\bfi(v,v)$ and $V_e(\bfi) = \sum_{u \neq v} R^\bfi(u,v)$ be the total weight of the constraints in $\bfi$ with and without repetitions, respectively. We choose an assignment $h:I \to B$ at random: each $h(v)$ is chosen independently and uniformly (both 0 and 1 with probability $1/2$).
The expected value of $\val(\bfi,\bfb,h)$ is $3 V_l(\bfi) + 3/2V_e(\bfi) $, which implies that $\opt\ib\leq 3 V_l(\bfi) + 3/2V_e(\bfi)$. 
As for $\sa^1$, we know that any feasible solution must have $p_{(v,v)}(a,b)=0$ whenever $a \neq b$. Therefore, we get
\begin{align*}
    \opt^{\sa^1}\ia & = \min \Big[ \sum_{v \in I}  \sum_{a \in A}  p_{R(v,v)}(a,a) R^\bfi(v,v) R^\bfa(a,a)   + \\ & \  \sum_{u \neq v \in I} \sum_{a,b \in A} p_{R(u,v)}(a,b) R^\bfi(u,v) R^\bfa(a,b) \Big]
      \geq 3 V_l(\bfi)  + 2 V_e(\bfi) > \opt\ib.
\end{align*}
\end{example}

\section{Conclusion}

We have shown that solvability of a PVCSP by the $\sa^1$ relaxation is equivalent to invariance under the Weisfeiler-Leman-like equivalence $\eqone$, and also to solvability in a natural distributed model.
The distributed algorithm for the narrower CSP setting from~\cite{ButtiD2021distributed} worked also for the search version of the problem, but this is unfortunately not the case for the algorithm presented in this paper. Is there an algorithm solving the search version of $\pvcsp\ab$ whenever the PVCSP is solvable by $\sa^1$? Note that in the search version an instance consists only of $\bfi$ and the goal is to find an assignment $h: I \to B$ such that $\val(\bfi,\bfb,h) \leq \opt \ia$. 

Another open problem emerges from Example~\ref{ex:SAnotBLP} which shows that $\blp$ and $\sa^1$ are not equivalent for PVCSPs. It follows from~\cite{ButtiD21fractional} that $\blp$ and $\sa^1$ are equivalent for PCSPs and from~\cite{ThapperZ16} that they are also equivalent for finite-valued VCSPs. Are these relaxations equivalent for general-valued VCSPs?

\paragraph{Acknowledgements} The authors are grateful to Victor Dalmau for his valuable comments.

\bibliography{biblio}

\end{document}